\providecommand\BibTeX{{%
			\normalfont B\kern-0.5em{\scshape i\kern-0.25em b}\kern-0.8em\TeX}}}
\newtheorem{thm}{Theorem}
\newtheorem{cor}[thm]{Corollary}
\theoremstyle{definition}
\newtheorem{defn}[thm]{Definition}
\theoremstyle{remark}
\newcommand{\parlabel}[1]{\smallskip\noindent\textbf{#1}\hspace{0.3em}}
\begin{document}

\title{Parallel Path Progression DAG Scheduling}

\author{
\IEEEauthorblockN{Niklas Ueter}
\IEEEauthorblockA{\textit{TU Dortmund University}} 
Dortmund, Germany \\
niklas.ueter@tu-dortmund.de
\and
\IEEEauthorblockN{Mario G\"unzel}
\IEEEauthorblockA{\textit{TU Dortmund University}} 
Dortmund, Germany \\
mario.guenzel@tu-dortmund.de
\and
\IEEEauthorblockN{Georg von der Br\"uggen}
\IEEEauthorblockA{\textit{TU Dortmund University}} 
Dortmund, Germany \\
vdb@tu-dortmund.de
\and
\IEEEauthorblockN{Jian-Jia Chen}
\IEEEauthorblockA{\textit{TU Dortmund University}} 
 Dortmund, Germany \\
jian-jia.chen@tu-dortmund.de
}

\maketitle

\begin{abstract} 
 To satisfy the increasing performance needs of modern cyber-physical systems, multiprocessor architectures are increasingly utilized.  
 To efficiently exploit their potential parallelism in hard real-time systems,  
 
 appropriate task models and scheduling  
 algorithms that allow providing timing guarantees are required. 
 Such scheduling algorithms and the corresponding worst-case response time analyses usually suffer from 
 resource over-provisioning due to pessimistic analyses based on worst-case assumptions.  
 Hence, scheduling algorithms 
 and analysis with high resource efficiency are required. 
  A prominent parallel task model is the 
 directed-acyclic-graph (DAG) task model, where precedence constrained subjobs 
 express parallelism.
 
 This paper studies the real-time scheduling problem of sporadic arbitrary-deadline 
 DAG tasks. 
 We propose a path parallel progression scheduling property with only two distinct subtask priorities, which  allows to track the execution of a 
 collection of paths simultaneously. This novel approach significantly improves 
 the state-of-the-art response time analyses for parallel DAG tasks for highly parallel 
 DAG structures. 
 Two hierarchical scheduling 
 algorithms are designed based on this property, extending the parallel path progression properties and improving the response time
 analysis for sporadic arbitrary-deadline DAG task sets.
\end{abstract}

\section{Introduction}
\label{sec:introduction}

Modern cyber-physical systems have gradually shifted from uniprocessor to 
multiprocessor systems in order to deal with thermal and energy constraints, 
as well as the computational demands of increasingly complex applications, like control, perception algorithms, 
 or video processing algorithms. 
This shift poses  
multiple challenges  
for the real-time verification, e.g., 
how to efficiently utilize the parallelism provided by multiprocessors for task sets 
with inter- and intra-task parallelism  
while guaranteeing that each task meets its deadline.
While inter-task parallelism refers to the parallel execution of distinct tasks, 
each of which executes sequentially, intra-task parallelism refers to 
the parallel execution of a single task. Intra-task parallelism requires task models 
with subtask level granularity that can be scheduled in parallel, e.g., 
Fork-join models~\cite{Lakshmanan:2010:SPR:1935940.1936239}, 
synchronous parallel task models, or DAG (directed-acyclic graph) based task models.

A plethora of real-time scheduling algorithms and response time analyses thereof 
have been proposed, e.g., for generalized parallel task models~\cite{SaifullahRTSS2011}, and for DAG (directed-acyclic graph) based task models~\cite{He2019, DongRTSS,DBLP:conf/rtss/00040BB020, Fonseca:2017:RTNS, DBLP:conf/sies/FonsecaNNP16,  DBLP:conf/ipps/Baruah15, DBLP:conf/ecrts/BonifaciMSW13, MelaniECRTS2015}.
For DAG-based task models, improvements in the response time analyses can be categorized into 
analyses that improve inter-task interference, e.g., in~\cite{DongRTSS, Fonseca:2017:RTNS}, or intra-task interference 
as e.g., in~\cite{Li:ECRTS14, He2019, DBLP:conf/ecrts/HeLG21, DBLP:conf/rtss/00040BB020}.
In general, intra-task interference analyses build upon the interference analysis 
along the execution of the envelope (or critical path).
The envelope (or critical path) of a DAG is a schedule-dependent sequence of subjobs of the DAG.
The sequence starts with the last finishing subjob in a concrete schedule and is iteratively composed 
by choosing the subjob that finishes latest among all directly preceding subjobs of the current subjob under consideration. 
When no further predecessors exist, the envelope is complete. 
Since these subjobs \emph{envelope} the execution of the DAG, the interference of the execution of 
this sequence (or path) bounds the response time.
In federated scheduling~\cite{Li:ECRTS14}, the intra-task interference of the envelope 
execution is upper-bounded by the workload of the non-envelope subjobs divided by the number 
of processors. The corresponding response time analysis requires no information about the 
internal structure of the DAG except for the total workload and the longest path.

This analysis was improved by He et al.~\cite{He2019}, who proposed a specific intra-node priority assignment  
for list-scheduling that  
must respect 
the topological ordering of the nodes within the DAG. This priority assignment and the inspection 
of the DAG structure  
results in a less pessimistic upper-bound for a task's self-interference 
of the envelope path compared to federated scheduling. 
These results are further improved and extended by Zhao et al.~\cite{DBLP:conf/rtss/00040BB020}, where
subjob dependencies are explicitly considered along the execution of the envelope path to more accurately bound 
self-interference. 
Most recently, He et al.~\cite{DBLP:conf/ecrts/HeLG21} improve their prior work  
by lifting the topological order restrictions in their intra-node priority assignments, 
which
further improved the results by Zhao et al.~\cite{DBLP:conf/rtss/00040BB020}.

In contrast to the state-of-the-art, we  
consider the progress of \emph{many parallel paths} 
instead of only the envelope in the response time analysis. 
The number of parallel paths 
can amount to the degree of maximum parallelism, i.e., the number of processors.
Since paths, i.e., a sequence of directly preceding subtasks, inherently limit the parallelism of a DAG, 
a path-based analysis instead of a subtask-based analysis is beneficial. 
More precisely, inversely to prior approaches, we do not track the execution progress by the analysis of envelope path interference, 
but track analyzable simultaneous progress of a collection of \emph{many parallel paths} 
alongside the envelope path using intra-task prioritization. By virtue of this novel approach, 
we only have to account for the interference of subjobs that do not belong to any of these parallel paths for 
a response time bound. 

To the best of our knowledge, this is the first paper that proposes \emph{parallel path progression} 
concepts and corresponding response time analyses as well as extensions to 
handle sporadic arbitrary-deadline DAG tasks.
Notably, our work is a generalization of the hierarchical DAG scheduling approach 
by Ueter et al.~\cite{Ueter2018}, which considers  the special case 
when the number of considered parallel paths is reduced to one.

\parlabel{Contributions.}
We provide the following contributions:
\begin{itemize}
\item Based on \emph{Parallel Path Progression Concepts} in Section~\ref{sec:parallel-path-progression-concepts}, 
  we propose a scheduling algorithm and a sustainable (or anomaly free) response time analysis 
  for an arbitrary collection of paths of at most the number of processors in Section~\ref{sec:parallel-path-progression-scheduling}.
 In Section~\ref{sec:n-path-collection-approximation-algorithm}, we provide an approximation algorithm to 
 find a path collection that allows to prove a bounded worst-case response-time with respect to an optimal response-time.

\item We extend our findings to two hierarchical scheduling algorithms in Section~\ref{sec:hierarchical-scheduling}.
  Namely a sporadic arbitrary-deadline gang reservation system in Section~\ref{sec:gang-reservation} and a sporadic 
  arbitrary-deadline ordinary reservation system in Section~\ref{sec:ordinary-reservation} that make use of the \emph{Parallel Path Progression Concepts}. 
  The hierarchical scheduling algorithm can be applied to sporadic arbitrary-deadline DAG tasks, which may be executed concurrently with tasks described by a different task model.

\item For both reservation systems, we provide response time analyses and algorithms to 
  generate feasible reservation systems in Section~\ref{sec:gang-reservation} and Section~\ref{sec:ordinary-reservation}, 
  respectively.

\item In Section~\ref{sec:evaluation}, we evaluate our approach using synthetically generated 
  DAG task sets and demonstrate that our approach outperforms the state-of-the art in high-parallelism 
  scenarios and show that the performance of our approach is between the start-of-the-art and federated scheduling 
  in more sequential scenarios.
\end{itemize}
 
\section{Task Model and Problem Description}
\label{sec:task-model-and-problem-description}

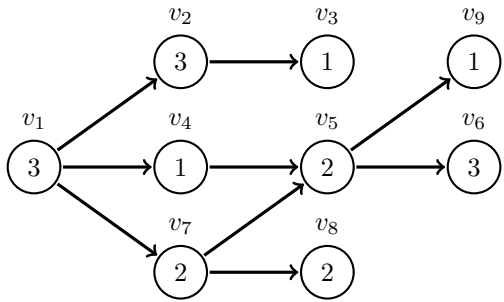
\begin{figure}[b]
\centering
\pgfdeclarelayer{layer1}
\pgfdeclarelayer{layer2}
\pgfdeclarelayer{layer3}
\pgfdeclarelayer{layer4}
\pgfdeclarelayer{foreground}
\pgfsetlayers{layer4,layer3,layer2,layer1,main,foreground}

\begin{tikzpicture}[]
  \def\ux{1.95cm}\def\uy{1.4cm}
        \tikzset{
           job/.style={minimum height=0.5*\uy, circle, draw, inner sep=2pt, outer sep = 1pt, thick}
        }

\node[job,  anchor=south west, label=above:{$v_1$}, draw] at (0, 0)  (V1) {$3$};
\node[job, anchor=south west, label=above:{$v_2$}, draw] at (1*\ux, 1*\uy)  (V2) {$3$};
\node[job,  anchor=south west, label=above:{$v_3$}, draw] at (2*\ux, 1*\uy)  (V3) {$1$};
\node[job, anchor=south west, label=above:{$v_4$}, draw] at (1*\ux, 0)  (V4) {$1$};
\node[job, anchor=south west, label=above:{$v_5$}, draw] at (2*\ux, 0)  (V5) {$2$};
\node[job, anchor=south west, label=above:{$v_6$}, draw] at (3*\ux, 0)  (V6) {$3$};

\node[job, anchor=south west, label=above:{$v_7$}, draw] at (1*\ux, -1*\uy)  (V7) {$2$};
\node[job, anchor=south west, label=above:{$v_8$}, draw] at (2*\ux, -1*\uy)  (V8) {$2$};

\node[job, anchor=south west, label=above:{$v_9$}, draw] at (3*\ux, 1*\uy)  (V9) {$1$};

\path[->, very thick]	(V1)		edge	[]	(V2);
\path[->, very thick]	(V2)		edge	[]	(V3);

\path[->, very thick]	(V1)		edge	[]	node [above] {}	(V4);
\path[->, very thick]	(V4)		edge	[]	node [above] {}	(V5);

\path[->, very thick]	(V5)		edge	[]	node [above] {}	(V6);

\path[->, very thick]	(V1)		edge	[]	node [above] {}	(V7);
\path[->, very thick]	(V7)		edge	[]	node [above] {}	(V8);

\path[->, very thick]	(V7)		edge	[]	node [above] {}	(V5);

\path[->, very thick]	(V5)		edge	[]	node [above] {}	(V9);

\end{tikzpicture}
\caption{An exemplary directed-acyclic graph (DAG) with subtasks $v_1, v_2, \dots, v_9$. 
The  
numbers within the 
nodes denote the subjob's worst-case execution time.
The arrows represent the precedence constraints  
indicating that the 
release of a subjob depends on the finishing of all incident subjobs.}
\label{fig:example-dag-plain}
\end{figure}
 
We consider a finite set $\mathbb{T} := \setof{\tau_1, \dots, \tau_z}$ 
of sporadic arbitrary-deadline directed-acyclic graph (DAG) tasks 
that are scheduled and executed upon $M$ homogeneous processors.
Each task \mbox{$\tau_i := (G_i, D_i, T_i) \in \mathbb{T}$} is defined by a DAG $G_{i}$ describing the subtasks 
and precedence constraints, minimal inter-arrival time $T_{i}$, and relative deadline $D_{i}$.
Each task releases an infinite sequence of task instances, called
jobs. We  
use $J_{i}^{\ell}$ to denote the $\ell$-th job of task $\tau_i$, and  
$a_i^{\ell}$, $f_i^{\ell}$, and $d_i^{\ell}=a_i^{\ell}+D_i$ to refer to the arrival time, finishing time, 
and (absolute) deadline of job $J_{i}^{\ell}$.

\parlabel{DAG.} 
The task's DAG $G_i$ is defined by the tuple $(V_i,E_i)$, where $V_i$ denotes the finite set
of subtasks and the relation $E_i \subseteq V_i \times V_i$ denotes the precedence constraints 
among them, such that there are no cyclic precedence constraints. 
To be mathematically precise, each job $J_i^{\ell}$ is associated with an instance of the DAG 
$G_i^{\ell}$ with corresponding $\ell$-th subjobs $v_j^{\ell}$ where $v_j$ is a subtask in $V_i$.
A subjob of the $\ell$-th job of task $\tau_i$, namely $v_j^{\ell}$ for $v_j \in V_i$, is released when all $\ell$-th 
subjobs $v_k^{\ell}$ for $(v_k, v_j) \in V_i$ have finished execution.
To reduce this notation, we drop the index of the task as well as of the job when analyzing one specific job.
That is, we refer to $G=(V,E)$ and $v_j \in V$ to denote a subjob of a specific DAG job.
An exemplary DAG is illustrated in Figure~\ref{fig:example-dag-plain}.

\parlabel{Volume.}
The volume \mbox{$vol_i: V_i \to \mathbb{N}_{\geq 0}$} specifies the worst-case execution time of each subtask $v_j \in V_i$, 
which means that no subjob (instance) $v_j^{\ell}$ ever executes for more than $vol_i(v_j)$ 
time-units on 
the execution platform, but may finish earlier. 
Moreover, the  
volume of any subset of subtasks $W \subseteq V_i$ 
is  
$vol(W) := \sum_{v_j \in W} vol_i(v_j)$.
In particular, the \emph{total volume} of a task is given by $C_i := vol_i(V_i)$.

\parlabel{Release \& Deadline.}
In real-time systems,
tasks must fulfill timing requirements, i.e., each job must finish
at most its total volume between the arrival of a job
at $a_{i}^{\ell}$ and that job's absolute deadline at $a_{i}^{\ell} + D_i$.
A task $\tau_i$ is said to meet its deadline if each job meets its deadline, i.e., 
$f_{i}^{\ell} \leq a_{i}^{\ell}+D_i$ for all $\ell \in \mathbb{N}$. 
We consider \emph{arbitrary-deadlines}, which means that 
we do not make any assumptions about the relation of deadline and inter-arrival time. 
For example, the relative deadline may be less than the minimal inter-arrival time $(D_i \leq T_i)$ 
in which case a new job is only released if the previous job is finished. 
Alternatively, the deadline can be larger than the minimal inter-arrival time $(D_i > T_i)$ in which 
case a new job can be released despite an unfinished prior job.
The release times of any two subsequent jobs of $\tau_i$ is at least $T_i$ time-units 
apart, i.e., \mbox{$a_{i}^{\ell+1} \geq a_{i}^{\ell}+T_i$} for all \mbox{$\ell \in \mathbb{N}$}.

\section{Parallelism \& Path-Aware Scheduling}
\label{sec:parallelism-and-path-aware-scheduling}

The parallelism of a DAG task is inherently limited by the (simple) paths it is composed of, 
since a path enforces a sequential execution order of the subjobs along the path.
In particular, the intertwining of paths degrades parallelism; for example, two disjoint paths 
allow for fully parallel execution, while two intertwined paths result in limited parallel execution.
In this paper, we aim to reduce intra-task interference by enforcing properties  
that track and guarantee parallel progress of a collection of paths within a DAG and thus 
allow to significantly improve the worst-case response time for high-parallel use cases.

To that end, we answer the following questions:
(\emph{1})~What are the minimal theoretical properties required to track and guarantee parallel path progression on a set of dedicated processors?
(\emph{2})~Can any collection of paths be used for parallel progression and if so what is an optimal selection?
(\emph{3})~How can the results from (\emph{1}) and (\emph{2}) be extended to consider inter-task interference? 

\subsection{Parallel Path Progression Concepts}
\label{sec:parallel-path-progression-concepts}

In this subsection, we examine the required properties to achieve parallel path progression on $M$ processors 
dedicated to execute a single job of a DAG task. By that, we avoid any inter-task interference 
and solely focus on intra-task interference.

\begin{defn}[Path]
\label{def:path}
 Let $G=(V,E)$ denote a DAG then for each subtask $v_j \in V$ the set of predecessors of $v_j$
 and the set of successor of $v_j$ is given by $pred(v_{j}) := \setof{v_{i} \in V~|~(v_{i}, v_{j}) \in E}$ and
  $succ(v_{j}) := \setof{v_{i} \in V~|~(v_{j}, v_{i}) \in E}$, respectively.
 A path in $G$ is a partially ordered set of subtasks $\pi := \setof{v_{1}, \dots, v_{k}, \dots, v_{n}}$
 such that $pred(v_{1}) = \emptyset$, $succ(v_{n}) = \emptyset$ 
 and $v_{k} \in pred(v_{k+1})$ for all $k \in \setof{1, \dots, n-1}$.
 If either $pred(v_{1}) \neq \emptyset$ or $succ(v_{n}) \neq \emptyset$ 
 holds then $\pi$ is considered a sub-path.
\end{defn}

\begin{defn}[$n$-Path Collection]
\label{def:n-path-collection} 
Given a DAG $G = (V,E)$ the enumeration of all possible paths in $G$ namely 
$\Psi(G) := \setof{\pi~|~\pi~\text{is a path according to Definition~\ref{def:path} in}~G}$ can be computed in exponential time.
Any subset of paths  \mbox{$\psi \in \mathcal{P}(\Psi(G))$} from the powerset of $\Psi(G)$ is called a \emph{path collection}.
 A path collection $\psi \in \mathcal{P}(\Psi(G))$ is an \emph{$n$-path collection} 
 if $|\psi| = n$, i.e., a collection of $n$ paths.
\end{defn}

It is intuitively clear that the maximal number of paths 
that can be executed in parallel is limited by the number of parallel processing elements, i.e., the number of processors $M$.
Therefore we constrain our solution space to $n$-path collections where 
$n \in \setof{1, \dots, M}$.
Based on a concrete $n$-path collection $\psi$, the set of subtasks that belong to at least one of the paths 
in $\psi$ is defined by $V_s(\psi) := \pi_{\psi_1} \cup \dots \cup \pi_{\psi_n}$ for each $\pi_{\psi_1}, \dots, \pi_{\psi_n} \in \psi$.
Conversely, the complement set of subtasks that do not belong to any of the selected paths 
is denoted by $V_s^c(\psi) := \setof{v \in V~|~v \notin V_s(\psi)}$.
We propose a parallel-progress prioritization that gives each subtask a priority based on the membership of the above sets, 
which is formalized in the following definition. 
Later in this section, we explain how this prioritization can be used to 
better analyze the self-interference by explicitly considering the 
parallel execution of paths in $\psi$ in the response-time analysis.

\begin{defn}[Parallel Path Progression Prioritization]
\label{def:parallel-progress-prioritization}
 Let $V_s(\psi)$ denote the set of 
 subtasks from an $n$-path collection $\psi$ of a DAG $G=(V,E)$.
 A fixed-priority policy for all subtasks $v \in V$ is a \emph{parallel path progression prioritization} 
 if and only if $\Pi(v_i) < \Pi(v_k)~\text{for any two}~ v_i \in V_s(\psi)~\text{and}~v_k \in V_s^c(\psi)$, where $\Pi(v_i)$ 
 denotes the priority of subtask $v_i$.
\end{defn}

Note that in our notation for the priorities,   
a higher value of $\Pi$ implies a higher priority, i.e., $\Pi(v_i) > \Pi(v_k)$ implies 
that $v_i$ has a higher priority than $v_k$.
A sufficient policy to satisfy the parallel path progression prioritization property 
is to only use two distinct priority-levels. 

For clarity of notation, we elaborate the above definitions collectively 
in the following example. 
The DAG in Figure~\ref{fig:example-dag-plain} can be exhaustively decomposed into six paths, i.e., 
$\Psi(G) := \setof{\pi_1, \pi_2, \dots, \pi_6}$.
The individual paths are: 
$\pi_1 := \{v_1, v_2, v_3\}$, $\pi_2 := \{v_1, v_4, v_5, v_9\}$, $\pi_3 := \{v_1, v_4, v_5, v_6\}$, 
$\pi_4 := \{v_1, v_7, v_5, v_9\}$, $\pi_5:=\{v_1, v_7, v_5, v_6\}$, and $\pi_6 := \{v_1, v_7, v_8\}$.
A $2$-path collection $\psi$ from the powerset $\mathcal{P}(\Psi(G))$ is for instance given by 
$\psi := \setof{\pi_2, \pi_3}$. Subsequently, $V_s(\psi) = \pi_2 \cup \pi_3 = \setof{v_1, v_4, v_5, v_6, v_9}$ 
and $V_s^c(\psi) := \setof{v_2, v_3, v_7, v_8}$.  
If for instance all subjobs
$v_i \in V_s(\psi)$ 
are assigned priority $\Pi(v_i) = 1$ and conversely all subjobs $v_i \in V_s^c(\psi)$ 
are assigned priority $\Pi(v_i) = 2$, then this prioritization is a valid parallel path progression prioritization.

\subsection{Parallel Path Progression Scheduling}
\label{sec:parallel-path-progression-scheduling}

In
this subsection, we look at a single DAG job that  
is scheduled on $M$ dedicated processors by a work-conserving preemptive and non-preemptive 
list scheduling 
algorithm in conjunction with the parallel path progression prioritization. We 
elaborate how this prioritization aids the analysis of the 
parallel progression of a path collection and in consequence the response time analysis of 
the DAG job. Furthermore, we show the sustainability of our timing analysis with respect to timing anomalies.  

\begin{defn}[List-FP]
\label{def:list-fp-dispatching}
 In a \emph{preemptive list-FP schedule} on $M$ dedicated processors, 
 a task instance (job) of a DAG task \mbox{$G=(V,E)$} with a fixed-priority assignment 
 of each each subjob $v \in V$ is scheduled according to the 
 following rules:
 \begin{itemize}
   \item A subjob arrives to the ready list if all preceding subjobs have executed until completion, i.e., 
     the subjob arrival time $a_{i}$ for each subjob $v_{i}$ is given by $\max\setof{f_{j}~|~v_{j} \in pred(v_{i})}$. 
     An arrived but not yet finished subjob is considered pending.
   \item At any time $t$, the $M$ highest-priority pending subjobs are executed on the $M$ processors 
     and a lower-priority subjob is preempted if necessary.
 \end{itemize}
For a \emph{non-preemptive list-FP schedule}, we assume that each subjob is non-preemptible, i.e., 
runs to completion once it started executing.
\end{defn}

Note that the required theoretical properties for parallel path progression 
are satisfied using only two distinct priorities and are thus satisfied regardless of 
which lower-priority subjob is preempted at any point in time.
From a practical standpoint however, unnecessary preemptions of different subjobs should be avoided by, 
e.g., favoring to preempt the subjobs that have already been preempted before to retain 
cache coherence.

Before we are able to provide a response time analysis, we introduce and formalize the concept of an envelope 
of a schedule.

\begin{defn}[Envelope]
\label{def:envelope}
 Let $S$ be any concrete schedule of the subjobs $V=\setof{v_1, \dots, v_\ell}$ of a given DAG job of some DAG task $G=(V,E)$.  
 Let each subjob $v_k \in V$ have the arrival time $a_k$ and finishing time $f_k$ in $S$. 
	We define the envelope of $G$ in $S$ as the collection of arrival and finishing time intervals 
        $[a_{k_1}, f_{k_1}), [a_{k_2}, f_{k_2}), \dots,  [a_{k_p}, f_{k_p})$ for some $p \in \setof{1, \dots, \ell}$ 
        backwards in an iterative manner as follows:
        
\begin{enumerate}
 	\item $k_{i} \neq k_j \in \setof{1,\dots,\ell}$ for all $i\neq j$.
 	\item $v_{k_p}$ is the subjob in $V$ with the maximal finishing time.
 	\item $v_{k_{i-1}}$ is the subjob preceding $v_{k_i}$ with maximal finishing time, for all $i \in \setof{p, p-1, \dots, 2}$.
 	\item $v_{k_1}$ is a source node, i.e., has no predecessor.
 \end{enumerate}
We call  $\pi_e := \setof{v_{k_1} , v_{k_2}, \dots, v_{k_p}}$ the envelope path. 
We note that the definition of an envelope for a DAG job may not be unique if there are subjobs with the same finishing times.
In that case, ties can be broken arbitrarily.
\end{defn}

In the remainder of this subsection, we analyze the response time of a DAG job using 
the previously introduced properties.
Let a schedule $S$ be generated for a single job $J$ according to the policy in Definition~\ref{def:list-fp-dispatching} on $M$ dedicated 
processors using the prioritization described in Definition~\ref{def:parallel-progress-prioritization}. 
For the schedule $S$, the time interval $[a_J, f_J)$ between 
the arrival time and finishing time of said job $J$ is partitioned into times that an envelope sub job 
is executed (busy interval) and times that no envelope subjob is executed (non-busy interval), i.e., 
$I_B := \setof{t \in [a_J, f_J)~|~\text{envelope subjob is executed at $t$}}$ and $I_N := \setof{t \in [a_J, f_J)~|~\text{envelope subjob is not executed at $t$}}$.
In consequence of the fact that $I_B \cap I_N = \emptyset$ and $I_B \cup I_N \equiv [a_J, f_J)$, the response time of DAG job $J$ is given by 
the cumulative amount of time spent in either of these two states.
We require that $\psi$ is at most an $M$- path collection, since 
the number of available processors naturally limits the parallel execution.
For the following analysis, please recall that a subjob $v \in V$ is called pending at time $t$ if $v$ has arrived
and has not yet finished.

Given an envelope path $\pi_e := \setof{v_{k_1} , v_{k_2}, \dots, v_{k_p}}$ for $S$ 
with parallel path prioritization of an $n$-path collection $\psi$, 
we partition each arrival and finishing time interval $[a_{k_i}, f_{k_i})$ 
for each envelope subjob $i \in \setof{1, \dots, p}$ into \emph{busy} $[a_{k_i}, f_{k_i})  \cap I_{B}$ and 
\emph{non-busy} $[a_{k_i}, f_{k_i}) \cap I_{N}$ sub sets. 
Moreover, the \emph{non-busy} partition is further classified into \emph{parallel path non-busy} if $v_{k_i} \in V_s(\psi)$ 
and \emph{non-parallel path non-busy} if $v_{k_i} \in V_s^c(\psi)$. 
The intuition of this approach is to tie the execution of an envelope subjob to the execution 
of subjobs of the path collection $\psi$, which is used in the forthcoming analyses.

\begin{thm}[Preemptive Response Time Bound]
\label{thm:response-time-preemptive}
The response time of a DAG job $J$ with an arbitrary $n$-path collection $\psi = \setof{\pi_{\psi_1}, \dots, \pi_{\psi_n}} \in \mathcal{P}(\Psi(G))$ 
(of $n$ at most $M$) scheduled on $M$ dedicated homogeneous processors using preemptive List-FP scheduling is bounded from above by  
\begin{equation}
\label{eq:response-time-preemptive}
 R_J \leq vol(\pi_*) + \frac{vol(V_s^c(\psi))}{M-n+1} 
\end{equation}
\end{thm}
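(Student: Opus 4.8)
The plan is to split the response time as $R_J = |I_B| + |I_N|$ along the envelope-interval partition and bound each part separately. First I would observe that consecutive envelope subjobs satisfy $a_{k_i} = f_{k_{i-1}}$, because $v_{k_{i-1}}$ is chosen as the latest-finishing predecessor of $v_{k_i}$ in Definition~\ref{def:envelope}; hence the intervals $[a_{k_i}, f_{k_i})$ are contiguous and partition $[a_J, f_J)$, with $a_{k_1} = a_J$ and $f_{k_p} = f_J$. Inside each such interval the envelope subjob $v_{k_i}$ is pending throughout, so the \emph{busy} state ($v_{k_i}$ executing) and the \emph{non-busy} state ($v_{k_i}$ pending but not executing) are well defined and together exhaust the interval.

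Bounding $|I_B|$ is the easy part. Since the envelope subjobs form a chain (Definition~\ref{def:path}, Definition~\ref{def:envelope}) they never run in parallel, and each $v_{k_i}$ only executes within its own interval $[a_{k_i}, f_{k_i})$; therefore $|I_B|$ equals the summed execution time of the envelope subjobs, which is at most $vol(\pi_e) \le vol(\pi_*)$, where $\pi_*$ is a maximum-volume (critical) path of $G$.

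The core of the argument is to show that at \emph{every} instant $t \in I_N$ at least $M-n+1$ processors execute subjobs from $V_s^c(\psi)$. Fix such a $t$ inside an interval $[a_{k_i}, f_{k_i})$; the envelope subjob $v_{k_i}$ is pending but not running, so by the preemptive List-FP rule (Definition~\ref{def:list-fp-dispatching}) all $M$ processors are occupied by subjobs of priority at least $\Pi(v_{k_i})$. If $v_{k_i} \in V_s^c(\psi)$, then by Definition~\ref{def:parallel-progress-prioritization} every subjob of priority $\ge \Pi(v_{k_i})$ already lies in $V_s^c(\psi)$, so all $M$ processors run $V_s^c(\psi)$ subjobs and the claim holds trivially. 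The delicate case is $v_{k_i} \in V_s(\psi)$: let $\pi_{\psi_j} \in \psi$ be a path containing $v_{k_i}$. I would argue that \emph{no} subjob of $\pi_{\psi_j}$ executes at $t$, because $\pi_{\psi_j}$ is a chain and $v_{k_i}$ has arrived: every predecessor of $v_{k_i}$ along $\pi_{\psi_j}$ has already finished, every successor along $\pi_{\psi_j}$ has not yet arrived (each depends on the still-pending $v_{k_i}$), and $v_{k_i}$ itself is not running. A counting argument then caps the number of concurrently executing $V_s(\psi)$ subjobs at $n-1$: each path of $\psi$ hosts at most one running subjob by chain sequentiality, so the path-sets of distinct running $V_s(\psi)$ subjobs are pairwise disjoint and nonempty; since none of them can contain the unoccupied path $\pi_{\psi_j}$, at most $n-1$ of the $M$ busy processors run $V_s(\psi)$ subjobs, leaving at least $M-n+1$ for $V_s^c(\psi)$.

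Finally I would integrate this instantaneous rate. Because at least $M-n+1$ processors consume $V_s^c(\psi)$ work throughout $I_N$, the total $V_s^c(\psi)$ work processed during $I_N$ is at least $(M-n+1)\,|I_N|$, while it can never exceed $vol(V_s^c(\psi))$; hence $|I_N| \le vol(V_s^c(\psi))/(M-n+1)$. Adding the two bounds yields the claim $R_J \le vol(\pi_*) + vol(V_s^c(\psi))/(M-n+1)$. The main obstacle is precisely the parallel-path subcase of the core step: correctly turning the fact that a stalled envelope subjob pins down an entire path of $\psi$ into the $n-1$ cap through the disjoint-occupancy counting. Once that fact is established, the remainder reduces to a standard work-conservation, Graham-style accounting.
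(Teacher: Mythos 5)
Your proposal is correct and follows essentially the same route as the paper's proof: the same envelope-based partition of $[a_J, f_J)$ into busy and non-busy times, the same two-case analysis ($v_{k_i} \in V_s(\psi)$ versus $v_{k_i} \in V_s^c(\psi)$) yielding at least $M-n+1$ processors executing $V_s^c(\psi)$ subjobs whenever the envelope stalls, and the same work-conservation accounting to get $|I_N| \le vol(V_s^c(\psi))/(M-n+1)$. If anything, your chain-sequentiality argument pinning down the path(s) containing a stalled envelope subjob spells out in more detail the paper's assertion that at most $n$ subjobs of $V_s(\psi)$ can be pending concurrently.
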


\begin{proof} 
  By the definition of the envelope (cf. Definition~\ref{def:envelope}), we know that the interval $[a_J, f_J)$ 
  of DAG job's arrival to its finishing time in a concrete preemptive List-FP schedule $S$ can be partitioned 
  into contiguous intervals $[a_{k_1}, f_{k_1} = a_{k_2}), \dots, [f_{k_{n-1}} = a_{k_n}, f_{k_n})$ where $[a_{k_i}, f_{k_i})$ denotes the arrival 
  and finishing time of subjob $v_{k_i}$ for all $i \in \setof{1, \dots, p}$ in the envelope $\pi_e := \setof{v_{k_1} , v_{k_2}, \dots, v_{k_p}}$.
  
  \parlabel{Busy Interval.}
  The amount of busy time is trivially given by $|[a_{k_i}, f_{k_i}) \cap I_B| \leq vol(v_{k_i})$ for each interval, which yields
  that $|[a_J, f_J) \cap I_B| \leq vol(\pi_e) \leq vol(\pi_*)$.
  
  \parlabel{Non-Busy Interval.}
  Since our scheduling policy is work-conserving we have that whenever an envelope subjob $v_{k_i}$ is not executing during $[a_{k_i}, f_{k_i})$ 
  then all $M$ processors must be busy executing non-envelope jobs.
  Since the subjobs $v_{k_i}$ from the envelope can be exclusively either in $V_s(\psi)$ or $V_s^c(\psi)$, we 
  analyze $|I_N \cap  [a_{k_i}, f_{k_i})|$ for both cases individually:
 \begin{itemize}
 \item \textbf{Parallel Path} Let $v_{k_i} \in V_s(\psi)$ and by assumption not execute at time $t$ then
     at most $n-1$ processors execute subjobs from $V_s(\psi)$. That is because
     all subjobs in $V_s(\psi)$ stem from $n$ different paths, which implies that there can never be
     more than $n$-many subjobs from $V_s(\psi)$ pending concurrently.
     Moreover, since $v_{k_i} \in V_s(\psi)$ by assumption and is not executing at $t$ at most $n-1$ processors
     execute subjobs from $V_s(\psi)$. Conversely, we know that at least $M-(n-1)$ processors execute subjobs from
     $V_s^c(\psi)$, since otherwise $v_{k_i}$ would be executed contradicting the case assumption.

\item \textbf{Non-Parallel Path} Let $v_{k_i} \in V_s^c(\psi)$ and by assumption not be executing at time $t$ then 
  it must be that no processor is executing any subjob from $V_s(\psi)$.
   That is because if any of the lower-priority subjobs in $V_s(\psi)$ would be executing,
   then the higher-priority envelope subjob $v_{k_i} \in V_s^c(\psi)$ would be executing as well, 
   which contradicts the case assumption. Conversely, we know that all $M$ processors are exclusively 
   used to execute subjobs from $V_s^c(\psi)$.
\end{itemize}

In conclusion, during each interval $[a_{k_i}, f_{k_i})$, whenever $v_{k_i}$ is not executed then at least $M$ processors execute subjobs from 
$V_s^c(\psi)$ for a parallel path non-busy interval and $M-(n-1)$ processors execute subjobs from $V_s^c(\psi)$ 
for a parallel path busy interval. Since $M-(n-1) < M$ the maximal cumulative amount of time during $[a_J, f_J)$ 
that an envelope subjob is not executed is no more than $vol(V_s^c(\psi))/(M-(n-1))$.
\end{proof}

\begin{cor}[Non-Preemptive Response Time Bound]
\label{cor:response-time-non-preemptive}
 The response time of a DAG job $J$ with an arbitrary $n$-path collection $\psi = \setof{\pi_{\psi_1}, \dots, \pi_{\psi_n}} \in \mathcal{P}(\Psi(G))$ 
(of $n$ {\bf at most $M-1$}) scheduled on $M$ dedicated homogeneous processors using non-preemptive List-FP scheduling is bounded from above by  
\begin{equation}
\label{eq:fed-response}
 R_J \leq vol(\pi_*) + \frac{vol(V_s^c(\psi))}{M-n} 
\end{equation}
\end{cor}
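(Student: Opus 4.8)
The plan is to follow the proof of Theorem~\ref{thm:response-time-preemptive} step for step, keeping every part that does not rely on preemption and re-examining only the one place where preemptability was used. First I would fix a concrete non-preemptive List-FP schedule $S$ of the job $J$, take its envelope $\pi_e = \setof{v_{k_1}, \dots, v_{k_p}}$ (Definition~\ref{def:envelope}), and decompose $[a_J, f_J)$ contiguously across the envelope intervals $[a_{k_i}, f_{k_i})$, each split into its busy part $[a_{k_i}, f_{k_i}) \cap I_B$ and non-busy part $[a_{k_i}, f_{k_i}) \cap I_N$. The busy bound is identical to the preemptive case: at most $vol(v_{k_i})$ time is spent executing each envelope subjob, so $|[a_J, f_J) \cap I_B| \leq vol(\pi_e) \leq vol(\pi_*)$. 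It remains to bound the cumulative non-busy time, for which I would again use that work-conservation forces all $M$ processors to be busy whenever the current envelope subjob $v_{k_i}$ is pending but not executing.

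For the non-busy analysis I would distinguish the two cases by membership of $v_{k_i}$. When $v_{k_i} \in V_s(\psi)$ (parallel path), the argument transfers verbatim, since it rests only on the path structure: the subjobs of $V_s(\psi)$ come from $n$ paths, each of which—being totally ordered by precedence—has at most one pending subjob at any instant, so at most $n$ subjobs of $V_s(\psi)$ are pending simultaneously; as $v_{k_i}$ is one of them and is not executing, at most $n-1$ processors run $V_s(\psi)$ subjobs and hence at least $M-(n-1)$ run $V_s^c(\psi)$. The genuinely new case, and the main obstacle, is $v_{k_i} \in V_s^c(\psi)$ (non-parallel path). Here the preemptive proof concluded that no lower-priority $V_s(\psi)$ subjob could occupy a processor while the higher-priority $v_{k_i}$ waits; under non-preemption this fails, because a $V_s(\psi)$ subjob that started before $v_{k_i}$ arrived cannot be evicted and may still hold a processor. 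The key step is to bound how many such subjobs there can be, and the same path-counting argument applies: at most $n$ subjobs of $V_s(\psi)$ are pending—and therefore at most $n$ are running—at once, so at least $M-n$ processors execute $V_s^c(\psi)$ work.

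Finally I would combine the two cases. Since $M-n < M-(n-1)$, the non-parallel-path case is the binding one, and during every non-busy sub-interval at least $M-n$ processors are devoted to $V_s^c(\psi)$; this is exactly where the hypothesis $n \leq M-1$ is needed, so that $M-n \geq 1$ and the denominator is well defined. Hence the total non-busy time is at most $vol(V_s^c(\psi))/(M-n)$, and adding the busy bound $vol(\pi_*)$ yields $R_J \leq vol(\pi_*) + vol(V_s^c(\psi))/(M-n)$, as claimed.
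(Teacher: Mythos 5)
Your proposal is correct and follows essentially the same route as the paper's own proof: both reduce to the argument of Theorem~\ref{thm:response-time-preemptive}, observe that priority inversion under non-preemption affects only the non-parallel-path non-busy case, and bound the number of blocking $V_s(\psi)$ subjobs by $n$ via the path-counting argument, yielding at least $M-n$ processors serving $V_s^c(\psi)$ work. Your write-up is in fact somewhat more explicit than the paper's (which only sketches the modified case), and you correctly identify where the hypothesis $n \leq M-1$ is needed.
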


\begin{proof}
 In the non-preemptive case, lower-priority subjobs from $V_s(\psi)$ can interfere 
 with higher-priority subjobs from $V_s^c(\psi)$ if former subjobs arrive earlier than latter.
 Since this inversion only affects the \emph{non-parallel path non-busy interval} analysis, we 
 only analyse this case. Let $v_{k_i} \in V_s^c(\psi)$ and by assumption not be executing at time $t$ then 
 at most $n$ subjobs from $V_s(\psi)$ block the higher-priority subjobs from $V_s^c(\psi)$ from 
 execution due to non-preemption. Conversely, we know that at least $M-n$ processors are 
 used to execute subjobs from $V_s^c(\psi)$. By similar arguments as in the proof of Theorem~\ref{thm:response-time-preemptive} 
 and since $M-n < M-(n-1)$ the maximal cumulative amount of time during $[a_J, f_J)$ 
that an envelope subjob is not executed is no more than $vol(V_s^c(\psi))/(M-n))$ for $M > n$ and the amount of time 
envelope subjobs are executed are $vol(\pi_e) \leq vol(\pi_*)$, which concludes the proof.
\end{proof}

\parlabel{Sustainability of our response time analysis.} Many multiprocessor hard real-time scheduling algorithms and schedulability 
analyses presented in the literature are not sustainable, which means that they suffer from timing anomalies. 
These anomalies describe the counter-intuitive phenomena 
that a job that was verified to always meet its deadline can miss its deadline by augmenting resources, 
e.g., to execute the job on more processors or to decrease the execution-time (early completion).
In Corollary~\ref{cor:early-completion}, we show that our response time bound is sustainable with respect 
to the number of processors and the subjob execution-time. This is a beneficial property
in dynamic environments, where available processors and execution times vary, and ultimately  
simplifies implementation efforts in real systems. 

\begin{cor}[Sustainability]
\label{cor:early-completion}
 The response time bounds in Theorem~\ref{thm:response-time-preemptive} and 
 Corollary~\ref{cor:response-time-non-preemptive}
 holds true for a DAG job with $G=(V,E)$ even if  any subjob $v \in V$ 
 completes before its worst-case execution time or if the number of processors is increased.
\end{cor}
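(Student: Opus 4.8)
The plan is to prove sustainability separately for the two augmentations---early completion and an increased processor count---and in each case to argue that the \emph{same} concrete-schedule reasoning used in the proofs of Theorem~\ref{thm:response-time-preemptive} and Corollary~\ref{cor:response-time-non-preemptive} still applies verbatim. The crucial observation, which I would state up front, is that neither of those proofs ever assumed a subjob executes for exactly its worst-case execution time, nor did either rely on the identity of any particular envelope: both bound the busy and non-busy portions of $[a_J, f_J)$ using only the WCET-defined quantities $vol(\pi_*)$ and $vol(V_s^c(\psi))$, and they do so for \emph{every} schedule the policy may produce. Since $vol(\pi_*)$ and $vol(V_s^c(\psi))$ are functions of the DAG structure and the worst-case execution times alone, they are invariant under both augmentations, so it suffices to show that each bound remains a valid upper bound on the new response time.

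For the early-completion case, I would fix the modified execution---in which some subjobs finish before their WCET---and observe that it is still a valid preemptive (resp.\ non-preemptive) List-FP schedule under the \emph{unchanged} priority assignment, since the priorities in Definition~\ref{def:parallel-progress-prioritization} depend only on path membership and not on execution times. Re-running the argument of Theorem~\ref{thm:response-time-preemptive} on this schedule, the busy contribution of each envelope subjob $v_{k_i}$ is still at most $vol(v_{k_i})$---an inequality that only slackens when $v_{k_i}$ runs short---so the total busy time is at most $vol(\pi_*)$; and during each non-busy interval at least $M-n+1$ processors still execute subjobs of $V_s^c(\psi)$, whose total demand is at most $vol(V_s^c(\psi))$ regardless of early completion. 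Hence the right-hand side of~\eqref{eq:response-time-preemptive} is unchanged, and the identical substitution handles the non-preemptive bound~\eqref{eq:fed-response}.

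For the processor-increase case, I would run the job on $M' \geq M$ processors and note that the $n$-path collection $\psi$ remains admissible because $n \leq M \leq M'$, and the priority assignment is again untouched. Theorem~\ref{thm:response-time-preemptive} applied with $M'$ then yields
\[
 R_J \leq vol(\pi_*) + \frac{vol(V_s^c(\psi))}{M'-n+1}.
\]
Because $M' \geq M$ forces the denominator to satisfy $M'-n+1 \geq M-n+1 > 0$, the fraction can only decrease, so the bound computed with the original $M$ still holds; the same monotonicity argument with denominator $M'-n \geq M-n > 0$ discharges the non-preemptive corollary.

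The step I expect to require the most care is articulating \emph{why} no timing anomaly can arise, that is, why we are entitled to apply the earlier bounds to the post-augmentation schedule without re-deriving anything. The potential pitfall is exactly the hallmark of anomalies: both augmentations can alter finishing times and thereby move the envelope $\pi_e$ to a different set of subjobs, or change which lower-priority subjob is preempted. The resolution I would emphasize is that the bounds in Theorem~\ref{thm:response-time-preemptive} and Corollary~\ref{cor:response-time-non-preemptive} are quantified over \emph{all} schedules the policy can generate and never reference the concrete envelope beyond the inequality $vol(\pi_e) \leq vol(\pi_*)$, which holds for every source-to-sink path; consequently, whichever envelope the augmented schedule produces, the same two-part bound applies, and no tracking of how the envelope migrates is needed.
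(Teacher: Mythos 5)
Your proof is correct and takes essentially the same approach as the paper's: both reduce sustainability to the observation that the busy/non-busy envelope decomposition underlying Theorem~\ref{thm:response-time-preemptive} and Corollary~\ref{cor:response-time-non-preemptive} applies to \emph{any} schedule the policy produces, with the busy part bounded via $vol(\pi_e) \leq vol(\pi_*)$, the non-busy part via the WCET demand $vol(V_s^c(\psi))$, and a denominator monotone in the processor count. If anything, your write-up is more careful than the paper's brief argument, since you explicitly address that the envelope may migrate to a different path under augmentation (where the paper simply asserts that $vol(\pi_e)$ and the non-busy interval lengths can only decrease), and you note that only the inequality $vol(\pi_e) \leq vol(\pi_*)$, valid for every envelope, is actually needed.
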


\begin{proof}
 This comes directly from the observation that the volume of the envelope path $vol(\pi_e)$ 
 as well as the length of \emph{parallel path non-busy} and 
 \emph{non-parallel path non-busy} intervals can only decrease if the worst-case execution time of any subjob decreases 
 or the number of processors is increased.
 Since the response time is upper-bounded by the sum of times that an envelope job is executed and 
 the sum of times that no envelope job is executed, the corollary is proved.
\end{proof}

\subsection{Parallel Path Progression Schedule Example \& Discussion}
\label{sec:schedule-example-and-discussion}

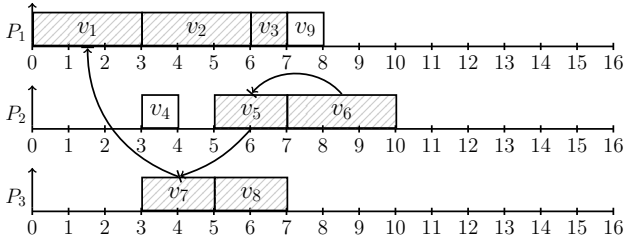
\begin{figure}[tb]
\centering
\resizebox{.95\linewidth}{!}{\begin{tikzpicture}[]
\def\ux{1.7cm}\def\uy{1.20cm}
    \def\dx{0.88cm}
        \tikzset{
          job/.style={fill=white, text=black, very thick, minimum height=8mm, draw},
          interference/.style={pattern=north east lines, text=black, minimum height=7mm},
          collection/.style={pattern=north east lines, pattern color=black!20, text=black, minimum height=8mm, very thick},
          nonservice/.style={fill=black!80, very thick, draw, text=black, minimum height=8mm},
	  envelope/.style={fill=gray!40,  text=black, very thick, minimum height=8mm, draw },
          prec/.style={color=black, very thick}
        }

	\begin{scope}[shift={(0, 6)}] 
           \foreach \x in {0, 1,...,16} 
           \draw[-, very thick](\x*\dx, 0.1*\dx) -- (\x*\dx, -0.1*\dx)node[below] {\Large $\x$};
           \draw[-, very thick] (0,0) node[anchor=south east] {{\Large $P_1$}}-- coordinate (xaxis) (16*\dx,0);
           \draw[->, very thick](0*\dx, 0.1*\dx) -- (0*\dx, 1.2*\dx)node[below] {};
           
           \node[collection, minimum width=3*\dx, anchor=south west,draw] at (0*\dx,0) (v1) {\LARGE $v_1$};
           \node[collection, minimum width=3*\dx, anchor=south west,draw] at (3*\dx,0) (v2) {\LARGE $v_2$};
           \node[collection, minimum width=1*\dx, anchor=south west,draw] at (6*\dx,0) (v3) {\LARGE $v_{3}$};
           \node[job, minimum width=1*\dx, anchor=south west,draw] at (7*\dx,0) (v9) {\LARGE $v_9$};
        \end{scope}

        \begin{scope}[shift={(0, 4)}] 
           \foreach \x in {0, 1,...,16} 
           \draw[-, very thick](\x*\dx, 0.1*\dx) -- (\x*\dx, -0.1*\dx)node[below] {\Large $\x$};
           \draw[-, very thick] (0,0) node[anchor=south east] {{\Large $P_2$}}-- coordinate (xaxis) (16*\dx,0);

            \draw[->, very thick](0*\dx, 0.1*\dx) -- (0*\dx, 1.2*\dx)node[below] {};
            \node[job, minimum width=1*\dx, anchor=south west,draw] at (3*\dx,0) (v4) {\LARGE $v_4$};
            \node[collection, minimum width=2*\dx, anchor=south west,draw] at (5*\dx,0) (v5) {\LARGE $v_5$};
            \node[collection, minimum width=3*\dx, anchor=south west,draw] at (7*\dx,0) (v6) {\LARGE $v_{6}$};
        \end{scope}

        \begin{scope}[shift={(0, 2)}] 
           \foreach \x in {0, 1,...,16} 
           \draw[-, very thick](\x*\dx, 0.1*\dx) -- (\x*\dx, -0.1*\dx)node[below] {\Large $\x$};
           \draw[-, very thick] (0,0) node[anchor=south east] {{\Large $P_3$}}-- coordinate (xaxis) (16*\dx,0);

            \draw[->, very thick](0*\dx, 0.1*\dx) -- (0*\dx, 1.2*\dx)node[below] {};
            \node[collection, minimum width=2*\dx, anchor=south west,draw] at (3*\dx,0) (v7) {\LARGE $v_7$};
            \node[collection, minimum width=2*\dx, anchor=south west,draw] at (5*\dx,0) (v8) {\LARGE $v_8$};
        \end{scope}

         \draw[->|,prec] (v6.north) to [bend right=50] (v5.north);
         \draw[->|,prec] (v5.south) to [bend left=15] (v7.north);
         \draw[->|,prec] (v7.north) to [bend left=35] (v1.south);
         
    \end{tikzpicture}

\caption{An exemplary List-FP schedule for the DAG in Figure~\ref{fig:example-dag-plain} 
is shown in Figure~\ref{fig:parallel-path-progression-example} on $3$ processors. A $3$-path collection 
$\psi := \setof{\setof{v_1, v_7, v_5, v_6}, \setof{v_1, v_7, v_8}, \setof{v_1, v_2,  v_3}}$ is chosen, which results in 
the corresponding set of subjobs $V_s(\psi) = \setof{v_1, v_2, v_3, v_5, v_6, v_7, v_8}$ (hatched) and $V_s^c(\psi) = \setof{v_4, v_9}$ (blank).}
\label{fig:parallel-path-progression-example}
\end{figure}

In this subsection, we exemplify the introduced parallel path progression scheduling properties. We also 
demonstrate the resulting response time analysis improvements of our approach 
against federated scheduling, which can be seen as a pessimistic special case of our approach 
when only one path is considered, i.e., $|\psi| = n=1$. 

A parallel path progression schedule for the DAG in Figure~\ref{fig:example-dag-plain} on $3$ processors 
is shown in Figure~\ref{fig:parallel-path-progression-example}. A $3$-path collection 
\mbox{$\psi := \setof{\setof{v_1, v_7, v_5, v_6}, \setof{v_1, v_7, v_8}, \setof{v_1, v_2,  v_3}}$} is chosen, which results in 
the corresponding set of subjobs $V_s(\psi) = \setof{v_1, v_2, v_3, v_5, v_6, v_7, v_8}$ of the parallel progress paths 
and $V_s^c(\psi) = \setof{v_4, v_9}$. The subjobs in $V_s(\psi)$ are emphasized by the hatchings, whereas subjobs in $V_s^c(\psi)$ 
are left blank.
In the given example, no preemption is required since at no point in time 
are there more subjobs pending than processors available. Subsequently, the pending intervals coincide with 
the execution intervals of each subjob.

By inspection of the schedule, we observe that 
each time $t \in [0,10)$ is parallel path progressive, since at each time all pending subjobs of 
$V_s(\psi)$ are executed. Respectively, the cumulative non-parallel path progressive set length is $0$ 
by visual inspection.

Starting from the latest finishing subjob $v_6$ the envelope of the 
schedule is emphasized by the arrows.
All subjobs of the envelope path $\pi_e := \setof{v_6, v_5, v_7, v_1}$ are subjobs from $V_s(\psi)$ and thus 
the cumulative parallel path progressive set length is exactly given by the volume of the longest path, which is $10$ 
in the shown example. 

In contrast to the exact result, the analytic bound as stated in Eq.~\eqref{eq:response-time-preemptive} 
bounds the cumulative response time by $10+2=2$. 
The improvements compared to federated scheduling~\cite{Li:ECRTS14} are directly observable. 
In federated scheduling, only the longest path 
in the DAG is considered, which results in a more pessimistic response time upper-bound of $10+8/3=12.7$.

\section{Path Collection Algorithm}
\label{sec:n-path-collection-approximation-algorithm}

In our approach, any $n$-path collection, where $n$ is at most the number of 
dedicated processors $M$ can potentially be chosen. We are however interested in the minimal 
achievable worst-case response time of a single DAG job (i.e., the minimum makespan of the DAG job) 
as stated in Eq.~\eqref{eq:fed-response}.
While it is obvious that in the case that $n=1$,  the volume of the set $V^c_s(\psi)$ 
is minimal if the longest path is chosen, finding a \emph{makespan optimal} $n$-path collection 
for $n > 1$ is non-trivial to the best of our knowledge.

For a given DAG $G=(V,E)$ however, it is possible to determine the minimal number of paths 
to cover all vertexes in $V$ in polynomial time as described by Dilworth
~\cite{Dilworth1990} and K\"onig-Egevary~\cite{DEMING}.
The minimal required number of paths to cover a DAG $G$ is commonly denoted as 
the \emph{width} of $G$. The paths that cover the complete DAG can be calculated 
in polynomial time using a reduction to the maximal matching problem in bipartite graphs.
We use $\textsc{PathCover}$ to refer to this algorithm and point to the literature 
for a detailed discussion of that algorithm.

Another related algorithm is the \textsc{Weighted Maximum Coverage}~\cite{Nemhauser1978AnAO} problem.
Hereinafter, we map the problem of finding of an $n$-path collection $\psi$ for a
DAG $G$ that maximizes $vol(V_s(\psi))$ to this problem as follows:
\begin{itemize}
\item \textbf{Input}:
  A problem instance $I$ of the \textsc{Weighted Maximum Coverage} problem is given by a collection of sets 
  $S := \setof{S_1, \dots, S_m}$, a weight function $\omega$, and a natural number $k$.
  Each set $S_i \subseteq U$ is a subset from some universe $U$ for each $i \in \setof{1, \dots, m}$ 
  and each element $s \in S_i$ is associated with a weight as given by the function $\omega(s)$.
 
\item \textbf{Objective}: For a given problem instance $I$, the
  objective is to find a subset $S' \subseteq S$ such that
  $|S'| \leq k$ and
  $\sum_{s \in \setof{\cup \setof{S_i \in S'}}} w(s)$ is maximized.
\end{itemize}

It was shown by Nemhauser et al.~\cite{Nemhauser1978AnAO} that any
polynomial time approximation algorithm of the \textsc{Weighted
  Maximum Coverage} problem has an asymptotic approximation ratio with
respect to an optimal solution that is lower-bounded by $1-1/e$ unless
$P=NP$, where $e$ is Euler's number.  This approximation ratio can be
achieved by a greedy strategy that always chooses the set which
contains the largest weights of not yet chosen elements. 
Despite \textsc{Weighted Maximum Coverage} and our problem not being equivalent, 
we use the approximation strategy for the $n$-path Collection Approximation 
in Algorithm~\ref{alg:n-path-collection-approximation}.

\begin{algorithm}[tb]
\caption{$n$-Path Collection Approximation (nPCA)}
\label{alg:n-path-collection-approximation}
\begin{algorithmic}[1]
\Require {DAG $G=(V,E)$, No. CPU $M$, WCET $vol$.}
\Ensure An approximately optimal $n$-path collection $\psi$
\State{$\psi^* \leftarrow$ $\small{\textsc{PathCover}}(G) :=   \setof{\pi_{\psi_1}, \dots, \pi_{\psi_{w}}}$} \label{alg:line:opt-begin}
\If{$w \leq M$}
\State{\textbf{return} $(\psi^*, w)$;} \label{alg:line:opt-end}
\EndIf
\State{create $\psi_0 \leftarrow \emptyset$;} \label{alg:line:approx-begin}
\State{$z \leftarrow \infty$;}
\State{$C_i \leftarrow vol(V)$;}
\For{\textbf{each} $n \in \setof{1, \dots, M}$}\label{alg:line:approx-outer}
\State{create $\psi_n \leftarrow \psi_{n-1}$;}
 \State{$\pi_n^* \leftarrow$ use $\small{\textsc{DFS}(G)}$ to search the max. volume path;} \label{alg:line:dfs}
   \State {$\psi_{n} \leftarrow \psi_{n} \cup \pi_n^*$;}
    \For{\textbf{each} $v \in \pi_n^*$}
     \State {update $vol(v)$ to $0$;}
     \EndFor
     \State{$z' \leftarrow (C_i - vol(\psi_n))/(M-(n-1))$;} \label{alg:line:compare}
     \If{$z' < z$} \label{alg:line:min-compare}
     \State{solution $(\psi^*, n^*) \leftarrow (\psi_n, n)$;}
     \State{update $z \leftarrow z'$;}
    \EndIf
 \EndFor
\State{\textbf{return} solution $(\psi^*, n^*)$;} \label{alg:line:approx-end}
\end{algorithmic}
\end{algorithm}

\parlabel{$n$-Path Collection Approximation Algorithm.}
From line~\ref{alg:line:opt-begin} to line~\ref{alg:line:opt-end} in 
Algorithm~\ref{alg:n-path-collection-approximation}, the minimal number $w$ of 
paths and the respective paths to cover the complete DAG is computed. 
If the number of processors $M$ is sufficient to allow the parallel execution of all $w$ paths, 
i.e., $w \leq M$ then those paths are chosen for the path collection.

In the other case, from line~\ref{alg:line:approx-begin} to \ref{alg:line:approx-end} an 
approximate solution is computed. 
In each iteration $n \in \setof{1, \dots, M}$, the longest path $\pi_n^*$ with 
respect to the current iteration's volume function is chosen. 
After the path is chosen, all volumes of that path's subjobs are set to $0$ to indicate 
that the subjobs have already been covered.
By this strategy, we always choose the path,  which contains the largest amount of volume of not yet chosen subjobs 
in each iteration.
Moreover, in each $n$-th iteration, it is probed in line~\ref{alg:line:min-compare} if 
the solution $\psi_n$ strictly improves the prior solution $\psi_{n-1}$ 
with one path less (line~\ref{alg:line:compare}) .
At the end of the $M$-th iteration, an $n^*$-path collection $\psi^*$ is found that yields 
formal guarantees as stated in Theorem~\ref{thm:n-path-collection-approximation}.

The maximal bipartite matching can be obtained in \mbox{$\mathcal{O}(|V| \cdot |E|)$} 
using the Ford-Fulkerson algorithm.
The time-complexity of $n$PCA is dominated 
by the for-loop and the depth-first search (DFS) in line~\ref{alg:line:dfs} that is invoked in each of the iterations, 
resulting in \mbox{$\mathcal{O}(M \cdot |V|\cdot |E|)$}. 

\begin{thm}[nPCA]
\label{thm:n-path-collection-approximation}
The worst-case response time of a DAG job $J$ on $M$ 
dedicated processors using parallel path 
progression scheduling and for which the $n^*$-path collection 
is calculated according to Algorithm~\ref{alg:n-path-collection-approximation}, 
is at most
\begin{equation} 
R_{opt} \cdot 
\begin{cases}
  1+ \frac{M}{M-n^*+1} \cdot \left(1-\frac{1}{w}\right)^{n^*} \leq 2-\frac{1}{w}& w > M \geq n^* \\
1 & \, M \geq w
\end{cases}
\end{equation}
where $w$ denotes the width of the DAG.
\end{thm}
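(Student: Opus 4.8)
The plan is to follow the two branches of Algorithm~\ref{alg:n-path-collection-approximation} separately, feeding the resulting collection into the response-time bound of Theorem~\ref{thm:response-time-preemptive} and comparing against $R_{opt}$ via two standard makespan lower bounds. First I would record these lower bounds for a single DAG job on $M$ processors: the longest-path bound $R_{opt} \geq vol(\pi_*)$, valid because the subjobs of $\pi_*$ must execute sequentially, and the work bound $R_{opt} \geq vol(V)/M = C_i/M$, valid because at most $M$ units of volume are processed per time unit. The easy case $M \geq w$ is then immediate: \textsc{PathCover} returns $w$ vertex-disjoint paths covering all of $V$, so $V_s^c(\psi^*) = \emptyset$, and Theorem~\ref{thm:response-time-preemptive} gives $R_J \leq vol(\pi_*) \leq R_{opt}$, i.e.\ ratio $1$.

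The core is the case $w > M$, where the greedy branch (lines~\ref{alg:line:approx-begin}--\ref{alg:line:approx-end}) runs. Let $\psi_n$ be the $n$-path collection after $n$ greedy iterations and $U_n := vol(V_s^c(\psi_n))$ the uncovered volume; note that $z'_n = U_n/(M-n+1)$ is exactly the interference term of Eq.~\eqref{eq:response-time-preemptive}, and the algorithm picks $n^*$ as its minimizer. The key lemma I would establish is the geometric decay $U_n \leq (1-\tfrac{1}{w})^n\, vol(V)$, proved by induction on $n$. The $w$ vertex-disjoint \textsc{PathCover} paths partition $V$, so the currently uncovered volume $U_{n-1}$ splits among them and, by averaging, one of them carries uncovered volume at least $U_{n-1}/w$. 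This submodular-style averaging step is the main obstacle, and two points require care: (i) that the minimum path cover is indeed vertex-disjoint, so the total uncovered volume is exactly partitioned and the averaging is valid, and (ii) that the greedy search in line~\ref{alg:line:dfs} ranges over complete source-to-sink paths whereas the cover paths may be sub-paths. The latter is handled by remarking that extending a sub-path to a complete path can only increase its marginal coverage, so the greedy choice gains at least $U_{n-1}/w$, yielding $U_n \leq (1-\tfrac{1}{w})\,U_{n-1}$.

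Finally I would assemble the estimates. Plugging the decay bound at $n = n^*$ into Eq.~\eqref{eq:response-time-preemptive} and applying the two lower bounds gives the $n^*$-specific factor:
\begin{equation*}
R_J \leq vol(\pi_*) + \frac{(1-\tfrac{1}{w})^{n^*}\, vol(V)}{M-n^*+1} \leq R_{opt}\left(1 + \frac{M}{M-n^*+1}\Bigl(1-\tfrac{1}{w}\Bigr)^{n^*}\right).
\end{equation*}
For the clean bound $2-\tfrac{1}{w}$ I would instead exploit that $n^*$ minimizes $z'_n$, so $z'_{n^*} \leq z'_1 = U_1/M \leq (1-\tfrac{1}{w})\,vol(V)/M \leq (1-\tfrac{1}{w})R_{opt}$; combined with $vol(\pi_*) \leq R_{opt}$ this delivers $R_J \leq R_{opt}(2-\tfrac{1}{w})$. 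I would present the two estimates as two independently valid upper bounds on the ratio rather than as a literal inequality chain between the two factors, since it is the comparison against the $n=1$ candidate (guaranteed because the algorithm's loop always evaluates $n=1$) that actually produces the universal $2-\tfrac{1}{w}$ guarantee.
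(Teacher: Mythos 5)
Your proposal follows the paper's proof essentially step for step: the same case split ($M \geq w$ handled by the exact \textsc{PathCover}, $w > M$ by greedy), the same key lemma $C_i - vol(\psi_n) \leq \left(1-\frac{1}{w}\right)^n C_i$ proved by induction with the averaging argument over the width-$w$ cover, and the same assembly using $R_{opt} \geq \max\setof{vol(\pi_*), C_i/M}$. Your two refinements are sound and in fact slightly more careful than the paper: the explicit justification that sub-paths of the cover can be extended to full paths without losing marginal coverage (the paper merely asserts the existence of a path covering $1/w$ of the remaining volume), and the observation that the universal $2-\frac{1}{w}$ guarantee comes from comparing against the $n=1$ candidate via minimality of $z'_{n^*}$ rather than from a literal pointwise inequality between the two factors --- indeed the factor $1+\frac{M}{M-n^*+1}\left(1-\frac{1}{w}\right)^{n^*}$ can exceed $2-\frac{1}{w}$ (e.g.\ $M=2$, $n^*=2$, $w=3$), so both expressions should be read as independently valid bounds on $R_J/R_{opt}$, exactly as you present them.
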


\begin{proof}
 We prove this theorem for the cases $M \geq w$ and $ M < w$ individually.
 \begin{itemize}
   \item In the first case let $M \geq w$. By definition of the width $w$ of a DAG $G=(V,E)$, each $v \in V$ is covered by at least 
 one of those $w$ many paths. Obviously, in the case that $w \leq M$, the graph's total workload 
 can be covered by those $w$ paths. 
 The optimal paths are returned in line~\ref{alg:line:opt-end}.
 In consequence, the response-time bound is given by
     \begin{equation}
       R_J \leq vol(\pi_*) + \frac{0}{M-(w-1)} \leq R_{opt}
      \end{equation}
\item In the other case let $w > M \geq n \geq 1$, i.e.,  $V$ can not be fully covered by any $n$ paths 
  and thus the graph's total workload can be neither.
  Thus starting from line~\ref{alg:line:approx-outer}, we approximate an optimal $n$-path collection 
  where $n$ is at most $M$. In the remainder of this proof, we will use $C_i$ to denote $vol(V)$ 
  with the initial $vol$ function, i.e., before any updates have occurred during the execution of the 
  algorithm.
 In the $n$-th iteration of line~\ref{alg:line:approx-outer} 
 it must be that there exists a path $\pi$ such that $vol(\pi)$ covers at least $1/w$ of 
 $C_i-vol(\psi_{n-1})$. Since by definition $vol(\pi_n^*) \geq vol(\pi)$ holds, we have that 
 \begin{equation}
  \label{eq:apx-lower-bound}
   vol(\pi_n^*) \geq \frac{C_i-vol(\psi_{n-1})}{w}
  \end{equation} where $\psi_0 := \emptyset$ and $vol(\psi_0) = 0$. We prove by induction that for each $n \in \setof{n \in \mathbb{N}~|~w > M \geq n \geq 1}$
 \begin{equation} 
   \label{eq:induction-hypothesis}
   C_i-vol(\psi_{n}) \leq \left(1-\frac{1}{w}\right)^{n} \cdot C_i
\end{equation} holds.

\parlabel{Induction Hypothesis.}
For $n=1$, Eq.~\eqref{eq:induction-hypothesis} reduces to $w \cdot vol(\psi_1) \geq C_i$, 
which holds true since we know that there exists some $w < \infty$ such that $vol(\pi_{\psi_1} \cup \dots \cup \pi_{\psi_{w}}) = C_i 
\leq \sum_{i=1}^w vol(\pi_{\psi_{i}}) \leq w \cdot vol(\psi_{1})$, since $\psi_1$ only contains the longest path 
in $G$.

\parlabel{Induction Step.}
In the induction step $n \rightarrow n+1$, we have that
\begin{equation}
\label{eq:induction-init}
  C_i-vol(\psi_{n+1}) = C_i-(vol(\psi_n) + vol(\pi^*_{n+1})) 
\end{equation}
Using Eq.~\eqref{eq:apx-lower-bound}, we conclude that
\begin{align}
  \text{Eq.}~\eqref{eq:induction-init} &\leq C_i-vol(\psi_n) - \frac{C_i-vol(\psi_{n})}{w} \\
&\leq  (C_i-vol(\psi_n)) \cdot \left(1-\frac{1}{w}\right) \\
&\leq \left(1-\frac{1}{w}\right)^{n} \cdot C_i \cdot \left(1-\frac{1}{w}\right)
\end{align}

\parlabel{Conclusion.}
In conclusion for the considered case, Eq.~\eqref{eq:induction-hypothesis} 
yields that after the $n$-th iteration of $n$PCA,
\begin{equation}
 (1- \left(1-\frac{1}{w}\right)^{n}) \cdot C_i \leq vol(\psi_{n})
\end{equation} holds and thus the maximum response-time using the computed 
$n$-path collection $\psi_n$ is at most
\begin{equation}
\label{eq:response-time-bound}
     R_J  \leq vol(\pi_*) + \frac{C_i}{M} \cdot \frac{M}{M-n+1} \cdot \left(1-\frac{1}{w}\right)^{n}
\end{equation}

Due to  the fact that $R_{opt} \geq \max \setof{vol(\pi_*), C_i/M}$ and the minimal 
response time solution $(\psi^*, n^*)$ returned by $n$PCA 
in line~\ref{alg:line:approx-end} using Eq.~\eqref{eq:response-time-bound} 
we have that
\begin{align}
R_J &\leq R_{opt} \cdot \min_{n\geq 1} \setof{1+ \frac{M}{M-n+1} \cdot \left(1-\frac{1}{w}\right)^{n}} \\
& \leq R_{opt} \cdot \min_{n\geq 1} \setof{1+ n \cdot \left(1-\frac{1}{w}\right)^{n}} \\
& \leq R_{opt} \cdot  \left(2-\frac{1}{w}\right) \label{eq:bound-second}
\end{align} 

Additionally, we have that
\begin{equation}
\label{eq:bound-first}
R_J \leq R_{opt} \cdot  \left(1+ \frac{M}{M-n^*+1} \cdot \left(1-\frac{1}{w}\right)^{n^*}\right)
\end{equation} 
Finally, we have 
$R_J \leq Eq.~\eqref{eq:bound-first} \leq Eq.~\eqref{eq:bound-second}$, which concludes the proof.
 \end{itemize}
\end{proof}

\section{Hierarchical Scheduling}
\label{sec:hierarchical-scheduling}

We extend the properties of parallel path progression to  
a system with inter-task interference using a hierarchical scheduling 
approach. That is, the scheduling problem is separated into different scheduling levels.
On the lowest level, reservation systems are scheduled on the physical processors 
by some scheduling policy that is compliant with the model of the reservation system.
On the higher level, the workload, i.e., the subjobs of a DAG job, is executed by the reservations 
in a temporally and spatially isolated environment.
The isolation property allows to analyse each DAG job's response time without 
inter-task interference.
Most importantly, reservation systems can be co-scheduled with other tasks on the same set of physical 
processors using existing response time analyses.

We propose and discuss two reservation schemes, namely a \emph{gang reservation system} in
Section~\ref{sec:gang-reservation} and an \emph{ordinary reservation system} in Section~\ref{sec:ordinary-reservation}, 
and provide resource provisioning rules and response time analyses.
The hierarchical scheduling problem consists of two interconnected problems:
\begin{itemize}
 \item Service provisioning of the respective \emph{gang-}reservation or  \emph{ordinary-}reservation systems 
   such that a DAG job can finish within the provided service.
 \item Verification of the schedulability of the provisioned reservation systems by 
 any existing analyses 
   that support the respective task models, e.g., sporadic arbitrary-deadline gang tasks 
   or sporadic arbitrary-deadline ordinary sequential tasks.
\end{itemize}
For the remainder of this section, we assume the existence of a feasible
schedule upon
$M$ identical multiprocessors for the studied reservation system model 
and focus on the service provisioning problem.
We assume that a reservation system satisfies the following four properties regardless 
of the specific reservation model.

\parlabel{Property 1 (Parallel Service).}
The reservation systems release $m$ parallel reservations such that 
at each time, during which the reservation system promises service, at most $m$ reservations 
can provide service concurrently.

\parlabel{Property 2 (Association of Service).}
An instance of the reservation system serves exactly one DAG job of a DAG task.
This means that an instance of the $m$ parallel reservations that serve the \mbox{$\ell$-th} job $J_i^{\ell}$ 
of DAG task $\tau_i$ all arrive synchronous at time $a_i^{\ell}$ and the deadline is given by $d_{i}^{\ell}$.
Note that if the next DAG job arrives before the previous one is finished, a 
new instance (job) of reservations is released. 
This allows to directly deal with arbitrary-deadlines without further considerations.

\parlabel{Property 3 (Sustained Service).}
The service of a reservation is provided whenever the reservation system is 
scheduled, irrespective of whether there are insufficient number of pending subjobs to be served 
at that time. 

\parlabel{Property 4 (Internal Dispatching).}
The reservation system's internal dispatching of each subjob $v \in V$  of a DAG job $G=(V,E)$ 
with parallel path-progression prioritization on $m$ reservations follows List-FP in Definition.~\ref{def:list-fp-dispatching}
with only one difference: At any time $t$ the, let $m(t)$ denote the concurrently scheduled reservations then the $m(t)$ 
highest-priority pending subjobs are executed on the reservations and a lower-priority subjob is preempted if necessary.

\subsection{Gang Reservation System}
\label{sec:gang-reservation}

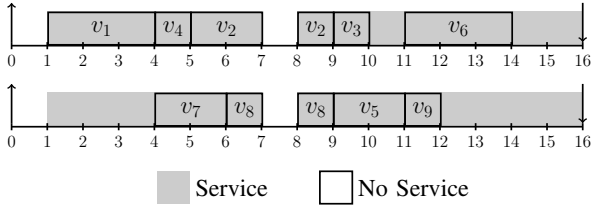
\begin{figure}[tb]
\centering
\resizebox{.9\linewidth}{!}{\begin{tikzpicture}[]
\def\ux{1.7cm}\def\uy{1.20cm}
    \def\dx{0.88cm}
        \tikzset{
          job/.style={fill=white, text=black, very thick, minimum height=8mm},
          interference/.style={pattern=north east lines, text=black, minimum height=7mm},
          collection/.style={ text=black, minimum height=8mm, very thick},
          nonservice/.style={fill=black!30, very thick, draw, text=black, minimum height=8mm},
          service/.style={fill=black!20, text=black, minimum height=8.5mm},
	  envelope/.style={fill=gray!40,  text=black, very thick, minimum height=8mm, draw },
          prec/.style={color=black, very thick}
        }

        \begin{scope}[shift={(0, 3.5)}] 
           \node[service, minimum width=6*\dx, anchor=south west] at (1*\dx,0) (service) {};
           \node[service, minimum width=8*\dx, anchor=south west] at (8*\dx,0) (service) {};
          
           \foreach \x in {0, 1,...,16} 
           \draw[-, very thick](\x*\dx, 0.1*\dx) -- (\x*\dx, -0.1*\dx)node[below] {\large $\x$};
           \draw[-, very thick] (0,0) node[anchor=south east] {}-- coordinate (xaxis) (16*\dx,0);
           \draw[->, very thick](0*\dx, 0.1*\dx) -- (0*\dx, 1.2*\dx)node[below] {};
           \draw[<-, very thick](16*\dx, 0.1*\dx) -- (16*\dx, 1.2*\dx)node[below] {};

           \node[collection, minimum width=3*\dx, anchor=south west,draw] at (1*\dx,0) (v1) {\LARGE $v_1$};
           \node[collection, minimum width=1*\dx, anchor=south west,draw] at (4*\dx,0) (v1) {\LARGE $v_4$};
           \node[collection, minimum width=2*\dx, anchor=south west,draw] at (5*\dx,0) (v1) {\LARGE $v_2$};
           \node[collection, minimum width=1*\dx, anchor=south west,draw] at (8*\dx,0) (v1) {\LARGE $v_2$};
           \node[collection, minimum width=1*\dx, anchor=south west,draw] at (9*\dx,0) (v1) {\LARGE $v_3$};
           \node[collection, minimum width=3*\dx, anchor=south west,draw] at (11*\dx,0) (v1) {\LARGE $v_6$};
           
        \end{scope}

        \begin{scope}[shift={(0, 1.5)}] 
          
           \node[service, minimum width=6*\dx, anchor=south west] at (1*\dx,0) (service) {};
           \node[service, minimum width=8*\dx, anchor=south west] at (8*\dx,0) (service) {};
          
           \foreach \x in {0, 1,...,16} 
           \draw[-, very thick](\x*\dx, 0.1*\dx) -- (\x*\dx, -0.1*\dx)node[below] {\large $\x$};
           \draw[-, very thick] (0,0) node[anchor=south east] {}-- coordinate (xaxis) (16*\dx,0);
           \draw[->, very thick](0*\dx, 0.1*\dx) -- (0*\dx, 1.2*\dx)node[below] {};
           \draw[<-, very thick](16*\dx, 0.1*\dx) -- (16*\dx, 1.2*\dx)node[below] {};

           \node[collection, minimum width=2*\dx, anchor=south west,draw] at (4*\dx,0) (v1) {\LARGE $v_7$};
           \node[collection, minimum width=1*\dx, anchor=south west,draw] at (6*\dx,0) (v1) {\LARGE $v_8$};
           \node[collection, minimum width=1*\dx, anchor=south west,draw] at (8*\dx,0) (v1) {\LARGE $v_8$};
           \node[collection, minimum width=2*\dx, anchor=south west,draw] at (9*\dx,0) (v1) {\LARGE $v_5$};
           \node[collection, minimum width=1*\dx, anchor=south west,draw] at (11*\dx,0) (v1) {\LARGE $v_9$};
        \end{scope}

         \begin{scope}[shift={(0, 0)}] 
                \node[service, minimum height=8mm, minimum width=8mm, rectangle, label=0:\LARGE Service] at (4, 0) {};
                \node[job, draw, minimum height=8mm, minimum width=8mm, rectangle, label=0:\LARGE No Service] at (8,0) {};
              \end{scope} 
         
    \end{tikzpicture}

\caption{Exemplary schedule for a $2$-gang reservation system denoted as $\tau_1$ and $\tau_2$ 
  with a $2$- path collection 
 of the DAG illustrated in Figure~\ref{fig:example-dag-plain} with a deadline of $16$ time units.}
\label{fig:example-schedule-gang-reservation}
\end{figure}

In gang scheduling, a set of threads is grouped together into a so-called gang with
the additional constraint that all threads of a gang must be co-scheduled at the same time
on available processors. It has been demonstrated that gang-based parallel computing can often
improve the performance~\cite{Feitelson92, Jette97, Wasly19}.
Due to its performance benefits, the gang model is supported
by many parallel computing standards, e.g., \emph{MPI}, \emph{OpenMP}, 
\emph{Open ACC}, or \emph{GPU computing}. 
Motivated by the practical benefits and the conceptual fit of parallel path progressions 
in our approach and the gang execution model, we propose an $m$-Gang reservation 
system as follows.

\begin{defn}[$m$-Gang Reservation System]
  \label{def:gang-reservation-system}
  A sporadic arbitrary-deadline $m_i$-gang reservation system $\mathcal{G}$ 
 that serves a sporadic arbitrary-deadline DAG task $\tau_i := (G_i, D_i, T_i)$ is 
 defined by the tuple $\mathcal{G}_i := (m_i, E_i, D_i, T_i)$ such that $m_i \cdot E_i$ 
 service is provided during the arrival- and deadline interval with the gang scheduling constraint 
 that all reservations must be co-scheduled at the same time.
\end{defn}

Hence, the provisioning problem of $\mathcal{G}_i$ for a DAG task $\tau_i$ 
is to find $m_i$ and $E_i$ such that 
given the properties 1-4 and the gang scheduling constraint, each DAG job can complete within one of the 
$m_i$ reservations before its absolute deadline.

\begin{thm}[Gang Reservation Provisioning]
\label{thm:response-time-federated-bi-partition-ops}
Each job $J_i^{\ell}$ of a sporadic arbitrary-deadline DAG task $\tau_i := (G_i, D_i, T_i)$ can complete 
its total volume within its respective gang reservation instance of the $m_i$ parallel reservations of size $E_i$ 
before its absolute deadline if
\begin{equation}
  \label{eq:gang-res}
  vol(\pi_{*}) + \frac{vol(V_s^c(\psi))}{m_i-n+1} \leq E_{i}  \leq D_i
  \end{equation} holds for any $n$- path collection $\psi$ of at most $m_i$ 
and the gang reservation system $\mathcal{G}_i$ is verified to be schedulable, i.e., 
able to provide all service before the absolute deadline. 
\end{thm}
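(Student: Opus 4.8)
The plan is to reduce the gang reservation setting to the dedicated-processor setting of Theorem~\ref{thm:response-time-preemptive} by analyzing the internal schedule along a \emph{service timeline} obtained from real time by collapsing every interval in which the reservation system provides no service. The first step is to observe that the gang scheduling constraint together with Property~1 forces the number of concurrently scheduled reservations to equal $m_i$ at every instant of service: since all reservations of a gang are co-scheduled, service is provided either on all $m_i$ reservations simultaneously or on none. Hence, restricted to its service intervals, the job $J_i^{\ell}$ is dispatched on exactly $m_i$ processors, so that $m(t)=m_i$ in the dispatching rule of Property~4.

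Next I would invoke Property~3 to argue that the service timeline is determined solely by the reservation schedule and is independent of how many subjobs happen to be pending; this is precisely what keeps the gang constraint consistent, because even when only a single subjob is ready, all $m_i$ reservations still deliver service rather than leaving the gang partially scheduled. Combined with Property~4, the schedule seen along the service timeline is exactly a preemptive List-FP schedule with parallel path progression prioritization on $m_i$ dedicated processors in the sense of Definition~\ref{def:list-fp-dispatching}. I can therefore apply Theorem~\ref{thm:response-time-preemptive} with $M$ replaced by $m_i$: measured in accumulated service, the job finishes no later than
\[
vol(\pi_{*}) + \frac{vol(V_s^c(\psi))}{m_i - n + 1}
\]
for any $n$-path collection $\psi$ with $n \leq m_i$.

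The decisive point to emphasize is that the bound of Theorem~\ref{thm:response-time-preemptive} depends only on the \emph{amount} of service delivered, i.e.\ on the length of the service timeline, and not on its temporal distribution within the arrival-to-deadline interval $[a_i^{\ell}, d_i^{\ell})$ over which the gang instance is active by Property~2. Re-expanding the service timeline back into real time, the job therefore completes as soon as each reservation has delivered at least $vol(\pi_{*}) + vol(V_s^c(\psi))/(m_i - n + 1)$ units of service. The provisioning inequality~\eqref{eq:gang-res} guarantees that $E_i$ is at least this quantity, so the job completes within the $E_i$ service of its gang instance. Finally, because $\mathcal{G}_i$ is assumed schedulable, these $E_i$ units are delivered before the absolute deadline; together with $E_i \leq D_i$ this yields $f_i^{\ell} \leq a_i^{\ell} + D_i = d_i^{\ell}$, which is the claim.

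The step I expect to be the main obstacle is making the service-timeline reduction fully rigorous, in particular verifying that collapsing the non-service intervals preserves both the work-conserving property used in the non-busy interval analysis of Theorem~\ref{thm:response-time-preemptive} and the precedence relations among subjobs, which must now be read off finishing times measured in service time rather than real time. Here the sustainability guarantee of Corollary~\ref{cor:early-completion} is reassuring, since it shows that the response-time bound is robust to exactly the kind of timing variation that intermittent, bursty service introduces.
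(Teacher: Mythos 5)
Your proof is correct and rests on the same core insight as the paper's---the gang constraint makes service all-or-nothing across the $m_i$ reservations, so during service the internal schedule is exactly preemptive List-FP on $m_i$ dedicated processors---but you package it as a reduction where the paper argues additively. The paper's proof partitions $[a_J, f_J)$ in real time into \emph{busy}, \emph{non-busy}, and \emph{non-service} intervals; the first two are bounded by rerunning the analysis of Theorem~\ref{thm:response-time-preemptive} with $M$ replaced by $m_i$, and the cumulative non-service time is bounded by $D_i - E_i$ via the schedulability assumption, giving $R_J \leq vol(\pi_*) + vol(V_s^c(\psi))/(m_i-n+1) + D_i - E_i \leq D_i$ under Eq.~\eqref{eq:gang-res}. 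You instead collapse the non-service intervals onto a service timeline, invoke Theorem~\ref{thm:response-time-preemptive} verbatim there, and conclude by noting that the $E_i$ units of service arrive before the absolute deadline. The collapse you flag as the main obstacle does go through: the collapsing map is monotone, Properties~3 and~4 guarantee that the service timeline exists independently of the pending workload and that every subjob arrival and completion lies in a service interval, and the proof of Theorem~\ref{thm:response-time-preemptive} uses only schedule-intrinsic quantities (envelope intervals, processor occupancy per priority class), none of which depend on real-time placement. What each approach buys: the paper avoids formalizing the collapse at the cost of re-deriving the busy/non-busy analysis ``in analogy'' to the dedicated-processor proof, while your version reuses the theorem as a black box and sidesteps quantifying the non-service time before $f_J$ altogether. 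One small correction: your appeal to Corollary~\ref{cor:early-completion} is misplaced---sustainability there concerns reduced execution times and added processors, not intermittent or bursty service---and the robustness you actually need is precisely the collapse-invariance you establish yourself, so that remark should be dropped rather than leaned on.
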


\begin{proof}
 Since in an $m$-gang all reservations provide service simultaneously, the arrival 
 and finishing time window $[a_J, f_J]$ of any DAG job $J$ of task $\tau_i$ can be partitioned into 
 \emph{busy}, \emph{non-busy}, and \emph{non-service} intervals 
 in which none of the $m_i$ gang reservations are scheduled and thus provide no service. 
 In analogy to previous proofs, the response time is no more than the cumulative length of these 
 sets, where the cumulative length of non-service times is upper-bounded by $D_i-E_i$ given  
 the assumption that the $m_i$ gang is schedulable.
 In consequence, if Eq.~(\ref{eq:gang-res}) holds, then
 \begin{equation}
   \label{eq:gang-res-result}
   R_J \leq vol(\pi_{*})+\frac{vol(V_s^c(\psi))}{m_i-n+1}+D_{i}- E_{i} \leq D_{i}
 \end{equation} which  concludes the proof.
\end{proof}

For illustration, an example of a feasibly provisioned gang reservation system 
is shown in Figure~\ref{fig:example-schedule-gang-reservation} for the DAG in 
Figure~\ref{fig:example-dag-plain} with  
longest path volume $10$, total volume $18$, 
and relative deadline $16$. Using Eq.~\eqref{eq:gang-res} for a $2$-gang and 
the \mbox{$2$-} path collection $V_s(\psi) = \setof{v_1, v_7, v_5, v_6, v_2, v_3}$ yields 
$10 + (18-14)/(2-2+1) = 14  \leq 16$. Figure~\ref{fig:example-schedule-gang-reservation} 
shows a feasible schedule of this provisioned $2$-gang system.

\begin{algorithm}[tb]
\caption{Feasible Minimal Waste $m_i$-Gang}
\label{alg:gang-reservation-design}
\begin{algorithmic}[1]
\Require {DAG Task $\tau_i := (G_i, D_i, T_i)$, No. CPU $M$, $vol_i$.}
\Ensure {Minimal Waste $m_i$-Gang $\mathcal{G}_i$ for DAG task $\tau_i$ and $n$.}
\State {$\xi \leftarrow [0, \dots, 0]$;}
\For{\textbf{each} $n \in \setof{1, \dots, M}$}
  \State {search the longest path $\pi_*$ in $G_i$ with respect to $vol_i$;}
  \If{$vol(\pi_*) > 0$} \Comment{Some subjobs not yet visited}
  \State{$\xi[n] \leftarrow \xi[n-1] + vol_i(\pi_*)$;}
     \State {set $vol_i(v) := 0$ for each $v$ in $\pi_*$;}
   \Else
   \State{break;}
\EndIf
\EndFor
\State{$W_{\sigma} \leftarrow [\emptyset, \infty];$} \Comment{$((m_i,n), waste(m_i, n))$}
\For{\textbf{each} $m_i \in \setof{1, \dots, M}$}
 \For{\textbf{each} $n \in \setof{1, \dots, m_i}$}
 \If{$E_i \leftarrow$ using Eq.~\eqref{eq:gang-res} with $(m_i, n) > D_i$}
 \State{continue;}
\EndIf
   \State{
     $
       W \leftarrow m \cdot \xi[1]+\frac{m_i}{m_i-n+1} (C_i-\xi[n])- C_i
     $;}
  \If{$W < W_{\sigma}[2]$;}
    \State{$W_{\sigma}[1] \leftarrow (m_i, n)$ and $W_{\sigma}[2] \leftarrow W$;}
  \EndIf
 \EndFor
\EndFor

\State{\textbf{return} $\mathcal{G}_i = (m_i, E_i, D_i, T_i)$ and $n$;}
\end{algorithmic}
\end{algorithm}

Finding the best provisioning for specific gang reservation systems depends 
on the concrete schedulability problem at hand and the other tasks that are to be co-scheduled. 
Hence, it may be beneficial to trade decreased budgets for increased gang size in some concrete scenarios.
Determining such specific provisions is  beyond the scope of this work.
In general, due to the gang restriction, increasing the number of reservations $m_i$ 
beyond the number of processors $M$ that the reservations are going to be executed upon 
is not possible. In a more generic optimization attempt, we seek to find a provisioning that minimizes the unused gang service (waste), which 
is described as $m_i \cdot E_i - C_i$. 
Due to the fact that $n \in \setof{1, \dots, m_i}$  and 
$m_i \in \setof{1, \dots, M}$, an exhaustive search can be applied to 
find the feasible tuple $((m_i, E_i, D_i, T_i), n)$ with $E_i \leq D_i$  that approximately minimize the waste objective 
\begin{equation}
  m_i \cdot vol(\pi_{*})+m_i \cdot \frac{C_i-vol(V_s(\psi))}{m_i-n+1} - C_i
\end{equation} as shown in Algorithm~\ref{alg:gang-reservation-design}.

\subsection{Ordinary Reservation System}
\label{sec:ordinary-reservation}

Despite the practical benefits of gang scheduling, the analytic schedulability 
due to the co-scheduling constraint is reduced compared to an equivalent ordinary reservation system 
without such constraints.

\begin{thm}[$m$-Ordinary Reservation Provisioning]
\label{thm:ordinary-reservation-provisioning}
 Each job $J_i^{\ell}$ of a sporadic arbitrary-deadline DAG task \mbox{$\tau_i := (G_i, D_i, T_i)$} can complete 
 within its respective ordinary reservation instance $\mathcal{O}_i$ before its absolute deadline if 
 firstly 
 \begin{equation}
   \label{eq:res-theorem}
  vol(\pi_{*}) + \frac{vol(V_s^c(\psi)) + (n-1) \cdot D_{i}}{m_i-n+1} \leq \frac{\sum_{p=1}^m E_i^p}{m_i-n+1} 
\end{equation}
 holds for any $n$-path collection $\psi$ where $n$ is at most $m_i$ and $E_i^p \geq vol(\pi_*)$ for $p \in \setof{1, \dots, m_i}$;
 and secondly, the ordinary reservation system $\mathcal{O}_i$ is verified to be schedulable, i.e., 
 each individual reservation is able to provide all service before the absolute deadline. 
\end{thm}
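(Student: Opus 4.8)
The plan is to argue by contradiction, reusing the envelope decomposition of Theorem~\ref{thm:response-time-preemptive} but now bookkeeping the service that the \emph{individual}, non-synchronized reservations provide. Suppose a job $J$ with window $[a_J,f_J)$ misses its deadline, i.e. $f_J > a_J + D_i$. Then over the whole interval $[a_J, a_J+D_i)$ the job is still running, so this interval is partitioned by the envelope into a \emph{busy} part $I_B$ (the current envelope subjob executes) and a \emph{non-busy} part $I_N$ (the current envelope subjob is pending but not running), with $|I_B|+|I_N| = D_i$. First I would record the trivial but decisive fact that, since $\pi_e$ is the last chain to finish, the accumulated busy time over the entire schedule never exceeds $vol(\pi_e) \le vol(\pi_*)$; the contradiction will be reached by forcing $|I_B|$ to already attain $vol(\pi_*)$ within $[a_J,a_J+D_i)$, which would mean the envelope, and hence $J$, has completed by the deadline.

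The core step is the non-busy analysis, carried out exactly as in Theorem~\ref{thm:response-time-preemptive} but with a time-varying number of active reservations $m(t)\le m_i$. During $I_N$, work-conservation leaves no active reservation idle (otherwise the pending envelope subjob would run), and by the $n$-path argument at most $n-1$ active reservations execute subjobs of $V_s(\psi)$ (the envelope occupies one of the $n$ paths and is, by assumption, not running, and the preemptive policy admits no inversion). Hence at least $m(t)-(n-1)$ active reservations consume $V_s^c(\psi)$ work throughout $I_N$. Writing $off(t) := m_i - m(t)$ for the number of reservations not currently providing service, this consumption rate is $m_i - off(t) - (n-1)$; integrating over $I_N$ and using that the $V_s^c(\psi)$ work consumed cannot exceed $vol(V_s^c(\psi))$ yields
\[ (m_i-n+1)\,|I_N| - \int_{I_N} off(t)\,dt \;\le\; vol(V_s^c(\psi)). \]
The remaining ingredient is to bound the accumulated \emph{off-time}: by schedulability each reservation $p$ delivers its full budget $E_i^p$ within $[a_J,a_J+D_i)$, so it is off for exactly $D_i - E_i^p$ of that interval, whence $\int_{I_N} off(t)\,dt \le \sum_{p=1}^{m_i}(D_i - E_i^p) = m_i D_i - \sum_{p=1}^{m_i} E_i^p$.

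Finally I would substitute this off-time bound together with the hypothesis $\sum_{p} E_i^p \ge (m_i-n+1)\,vol(\pi_*) + vol(V_s^c(\psi)) + (n-1)D_i$, which is precisely Eq.~\eqref{eq:res-theorem} after clearing the common denominator $m_i-n+1>0$. A short calculation collapses the resulting right-hand side to $(m_i-n+1)(D_i - vol(\pi_*))$, so that $|I_N| \le D_i - vol(\pi_*)$ and therefore $|I_B| = D_i - |I_N| \ge vol(\pi_*) \ge vol(\pi_e)$; since total busy time cannot exceed $vol(\pi_e)$, the envelope must have finished inside $[a_J,a_J+D_i)$, contradicting $f_J > a_J+D_i$. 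The hard part, and the place where this proof genuinely departs from the gang case of Theorem~\ref{thm:response-time-federated-bi-partition-ops}, is precisely the non-synchronized service: because $m(t)$ fluctuates, one cannot treat non-service as a single global $D_i-E_i$ block, and the $(n-1)D_i$ term in Eq.~\eqref{eq:res-theorem} is exactly what absorbs the accumulated per-reservation off-time $m_i D_i - \sum_p E_i^p$ against the $(m_i-n+1)$-fold parallel consumption of $V_s^c(\psi)$. I would also flag that the bound of $n-1$ on the $V_s(\psi)$-executing reservations uses preemption, as in the preemptive base theorem; a non-preemptive variant would lose one further unit to priority inversion.
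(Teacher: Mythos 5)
Your proposal is correct and takes essentially the same route as the paper's proof: an envelope decomposition of the deadline window into busy and non-busy time, the observation that whenever the envelope stalls at least $m_i(t)-(n-1)$ active reservations consume $V_s^c(\psi)$ work, and the per-reservation off-time accounting $\int \bar{m}_i(t)\,dt \le \sum_{p=1}^{m_i}(D_i - E_i^p)$, followed by substitution of Eq.~\eqref{eq:res-theorem}. Your only departures are presentational --- you argue by contradiction over the fixed window $[a_J, a_J+D_i)$ (which neatly avoids the paper's implicit assumption $[a_J,f_J)\subseteq[a_J,a_J+D_i)$ in its direct bound) and you merge the paper's two non-busy cases (envelope subjob in $V_s(\psi)$ versus in $V_s^c(\psi)$) into the single uniform bound, which is valid since the non-parallel-path case gives zero reservations on $V_s(\psi)$, trivially at most $n-1$.
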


\begin{proof}

Let $S$  be a schedule of $m_i$ ordinary reservations that 
are feasibly schedulable.
That is, each individual reservation is guaranteed to provide $E_i^p$ service 
to the DAG job $J := J_i^{\ell}$ during the interval $[a_J, a_J+D_i)$ for $p \in \setof{1,\dots, m_i}$.
Let $\psi := \setof{\pi_{\psi_1}, \dots, \pi_{\psi_n}}$ denote the $n$-path collection that composes the set of subjobs $V_s(\psi)$. 
Let $m_i(t) \in \setof{0, 1, \dots, m_i}$ denote the number of reservations that provide service 
  at time $t$ and let  $h(t) \in \setof{0, \dots, m_i(t)}$ denote the number of reservations providing service to  subjobs in $V_s^c(\psi)$ at time $t$.
Similar to the previous proof, $[a_J, f_J)$ is partitioned into times that an envelope subjob 
is serviced and executed, namely a \emph{busy interval}, and times that an envelope subjob is not serviced and executed, namely 
a \emph{non-busy interval}.
In consequence of the fact that both intervals are disjoint and cover the interval $[a_J, f_J)$, the response time of $J$ is given by 
the cumulative amount of time spent in either of these two states.

Now consider an envelope path $\pi_e := \setof{v_{k_1} , v_{k_2}, \dots, v_{k_p}}$ derived from $S$
we partition each arrival and finishing time interval $[a_{k_i}, f_{k_i})$ 
for each envelope subjob $i \in \setof{1, \dots, p}$ into \emph{busy} $[a_{k_i}, f_{k_i})  \cap I_{B}$ and 
\emph{non-busy} $[a_{k_i}, f_{k_i}) \cap I_{N}$ sub sets. 

\parlabel{Busy Interval.}
Clearly, the amount of cumulative \emph{busy} times is the same as in 
case of dedicated processors, namely $|[a_{k_i}, f_{k_i})  \cap I_{B}|= vol(v_{k_i})$.

\parlabel{Non-Busy Interval.}
We further partition the \emph{non-busy} interval into \emph{parallel path non-busy} if 
the envelope subjob $v_{k_i} \in V_s(\psi)$ and \emph{non-parallel path non-busy} if $v_{k_i} \in V_s^c(\psi)$. 
Since our scheduling policy is work-conserving we have that whenever an envelope subjob $v_{k_i}$ is not serviced at time $ t \in [a_{k_i}, f_{k_i})$ 
then all $m(t)$ reservations must be servicing non-envelope jobs.
We analyze the non-busy interval by cases:
  
  \begin{itemize}
    \item \textbf{Non-Parallel Path} Let by assumption $v_{k_i} \in V_s^c$ 
      then for any time $t \in [a_{k_i}, f_{k_i}) \cap I_N$ each of the $m(t)$ reservations is exclusively servicing
      subjobs from $V_s^c(\psi) \setminus v_{k_i}$ . This is due to the fact 
      that $V_s(\psi)$ subjobs have lower-priority than $V_s^c(\psi)$ subjobs and thus the servicing of 
      a subjob from $V_s(\psi)$ would imply the servicing of all pending $V_s^c(\psi)$ subjobs, which contradicts the assumption 
      that $v_{k_i} \in V_s^c$ is not serviced despite being pending.
      In consequence of this implication we have that 
      $I_{k_i}:= \setof{t \in [a_{k_i}, f_{k_i}) \cap I_N} \subseteq \setof{t \in [a_{k_i}, f_{k_i})~|~h(t) = m_i(t)}$ and thus
      \begin{equation}
        |I_{k_i}| \leq \int_{a_{k_i}}^{f_{k_i}} [h(t) = m_i(t)]~dt
       \end{equation}
       We introduce the auxiliary function $\bar{m}_i(t) = m_i-m_i(t)$ to formalize the non-service at time $t$.
       Since $(h(t)+\bar{m}_i(t))/m_i \leq 1~\forall t \in [a_{k_i},f_{k_i}) \cap I_N$, we bound
      \begin{equation}
        \label{eq:bound-non-busy-complement}
        |I_{k_i}| \leq \int_{a_{k_i}}^{f_{k_i}} \frac{h(t)+\bar{m}_i(t)}{m_i}~dt
      \end{equation}

 \item \textbf{Parallel Path}
Let by assumption the envelope subjob $v_{k_i} \in V_s$ not being serviced by any $m_i(t)$ at time $t \in [a_{k_i}, f_{k_i})$. 
Further let $[a_{i_1}, f_{i_1}), [a_{i_2}, f_{i_2}), \dots, [a_{i_z}, f_{i_z})$ denote the arrival and finishing time 
intervals of the subjobs $v_{i_1},\dots, v_{i_z}$ for each of the paths $\pi_{\psi_1}, \pi_{\psi_i},\dots, \pi_{\psi_n}$ from $\psi$ 
in the concrete schedule $S$. 
Note that, in contrast to an envelope, the arrival and finishing time intervals of those paths are not necessarily contiguous, 
i.e., $a_{i_{j+1}} \geq f_{i_j}$ for $j \in \setof{1, \dots, z-1}$ and thus requires more thought to asses when subjobs are pending. 
We use $n(t) \in \setof{1, \dots, n}$ to denote the number of pending subjobs 
from $V_s(\psi)$ at time $t$, which depends on the above arrival and finishing time intervals in $S$.
By case assumption, we know that for each $t \in [a_{k_i}, f_{k_i}) \cap I_N$ at most $n(t)-1$ 
subjobs from $V_s(\psi)$ are serviced by $m_i(t)$ at time $t$. 
Additionally, we know that pending $V_s^c(\psi)$ subjobs are prioritized before $V_s(\psi)$ subjobs, i.e.,
\begin{align}
   h(t) &\geq m_i(t)-\min\setof{m_i(t), n(t)-1} \\
  & \geq m_i(t)-(n(t)-1) \geq m_i(t)-(n-1)
\end{align}

In consequence of this implication we have that
$I_{k_i} \subseteq \setof{t \in [a_{k_i}, f_{k_i})~|~h(t) \geq m_i(t)-(n-1)}$ and thus
\begin{equation}
  |I_{k_i}| \leq \int_{a_{k_i}}^{f_{k_i}} [h(t) \geq m_i(t)-(n-1)]~dt
\end{equation}
Reusing the auxiliary function $\bar{m}_i(t) = m_i-m_i(t)$ and the fact 
that 
\begin{equation}
 \frac{h(t)+\bar{m}_i(t)}{m_i-n(t)+1} \geq 1 ~\forall t \in [a_{k_i}, f_{k_i}) \cap I_N
\end{equation} we conclude that 
\begin{equation}
\label{eq:bound-non-busy-collection}
  |I_{k_i}| \leq \int_{a_{k_i}}^{f_{k_i}} \frac{h(t)+\bar{m}(t)}{m_i-(n-1)}~dt
\end{equation}
\end{itemize}

In conclusion we have that
\begin{equation}
\label{eq:non-busy-combined}
 |I_N| \leq \sum_{i=1}^p |I_{k_i}| := 
\begin{cases} 
  \text{Eq.~\eqref{eq:bound-non-busy-collection}} & \text{if $v_{k_i} \in V_s(\psi)$} \\
  \text{Eq.~\eqref{eq:bound-non-busy-complement}} & \text{if $v_{k_i} \in V_s^c(\psi)$}
\end{cases}
\end{equation} and since Eq.~\eqref{eq:bound-non-busy-collection} $\geq$ Eq.~\eqref{eq:bound-non-busy-complement}, 
we conclude that 

\begin{equation}
\label{eq:non-busy-combined-bound}
 |I_N| \leq \sum_{i=1}^p \int_{a_{k_i}}^{f_{k_i}} \frac{h(t)+\bar{m}(t)}{m_i-(n-1)}~dt = \int_{a_J}^{f_J} \frac{h(t)+\bar{m}(t)}{m_i-(n-1)}~dt
\end{equation}

By definition of $\int_{a_J}^{f_J} h(t)~dt = vol(V_s^c(\psi))$ we conclude that
\begin{equation}
|I_N| \leq \frac{vol(V_s^c(\psi))}{m_i-n+1} + \int_{a_J}^{f_J} \frac{\bar{m}_i(t)}{m_i-n+1}~dt
\end{equation}

In conclusion, we know that $R_j = |I_B| + |I_N|$
\begin{equation}
  \label{eq:final-proof-a}
  R_J \leq vol(\pi_*)  + \frac{vol(V_s^c(\psi))+\int_{a_J}^{f_J} \bar{m}_i(t)~dt}{m_i-n+1}
\end{equation}
The contract of the reservation system for a job of DAG task $\tau_i$ promises 
to provide $E_i^1, \dots, E_i^{m_i}$ service during the arrival time of the DAG job $J$ and its deadline $[a_J, a_J+D_{i}) \supseteq [a_J, f_J)$.
Therefore, each of the $m_i$ individual reservations does not provide service for at most $D_{i}-E_{i}^p$ for $p \in \setof{1, \dots, m_i}$ 
amount of time, which implies that
\begin{equation}
  \label{eq:final-proof-b}
 \int_{a_J}^{f_J} \bar{m}_i(t)~dt \leq \sum_{p=1}^{m_i} D_{i}-E_{i}^p
\end{equation}
Furthermore, injecting Eq.~(\ref{eq:final-proof-b}) into Eq.~(\ref{eq:final-proof-a}),
\begin{align*}
  R_J \leq &  vol(\pi_*)  + \frac{vol(V_s^c(\psi))+m_i D_i -\sum_{p=1}^{m_i} E_{i}^p}{m_i-n+1}  \\
  = & D_i+vol(\pi_*)  + \frac{vol(V_s^c(\psi))+(n-1)D_i -\sum_{p=1}^{m_i} E_{i}^p}{m_i-n+1} \\
  \leq & D_i \qquad\qquad(\mbox{due to Eq.~(\ref{eq:res-theorem})})
\end{align*}
\end{proof}

\begin{figure}[tb]
\centering
\resizebox{.9\linewidth}{!}{\begin{tikzpicture}[]
\def\ux{1.7cm}\def\uy{1.20cm}
    \def\dx{0.88cm}
        \tikzset{
          job/.style={fill=white, text=black, very thick, minimum height=8mm, draw},
          interference/.style={pattern=north east lines, text=black, minimum height=7mm},
          collection/.style={ text=black, minimum height=8mm, very thick},
          nonservice/.style={fill=black!8, very thick, draw, text=black, minimum height=8.5mm},
          service/.style={fill=black!20, text=black, minimum height=8.5mm},
	  envelope/.style={fill=gray!40,  text=black, very thick, minimum height=8mm, draw },
          prec/.style={color=black, very thick}
        }

        \begin{scope}[shift={(0, 7.5)}]
          
           \node[service, minimum width=2.5*\dx, anchor=south west] at (0*\dx,0) (service) {};
           \node[service, minimum width=2*\dx, anchor=south west] at (4*\dx,0) (service) {};
                   \node[service, minimum width=9*\dx, anchor=south west] at (6*\dx,0) (service) {};
          
           \foreach \x in {0, 1,...,16} 
           \draw[-, very thick](\x*\dx, 0.1*\dx) -- (\x*\dx, -0.1*\dx)node[below] {\large $\x$};
           \draw[-, very thick] (0,0) node[anchor=south east] {}-- coordinate (xaxis) (16*\dx,0);
           \draw[->, very thick](0*\dx, 0.1*\dx) -- (0*\dx, 1.2*\dx)node[below] {};
           \draw[<-, very thick](16*\dx, 0.1*\dx) -- (16*\dx, 1.2*\dx)node[below] {};

           \node[collection, minimum width=1*\dx, anchor=south west,draw] at (6*\dx,0) (v1) {\LARGE $v_3$};
            \node[collection, minimum width=1*\dx, anchor=south west,draw] at (8*\dx,0) (v2) {\LARGE $v_9$};
        \end{scope}

	\begin{scope}[shift={(0, 5.5)}] 
          
           \node[service, minimum width=3*\dx, anchor=south west] at (0*\dx,0) (service) {};
           \node[service, minimum width=10.5*\dx, anchor=south west] at (4*\dx,0) (service) {};
           
           \foreach \x in {0, 1,...,16} 
           \draw[-, very thick](\x*\dx, 0.1*\dx) -- (\x*\dx, -0.1*\dx)node[below] {\large $\x$};
           \draw[-, very thick] (0,0) node[anchor=south east] {}-- coordinate (xaxis) (16*\dx,0);
           \draw[->, very thick](0*\dx, 0.1*\dx) -- (0*\dx, 1.2*\dx)node[below] {};
           \draw[<-, very thick](16*\dx, 0.1*\dx) -- (16*\dx, 1.2*\dx)node[below] {};
           
           \node[collection, minimum width=3*\dx, anchor=south west,draw] at (0*\dx,0) (v1) {\LARGE $v_1$};
           \node[collection, minimum width=2*\dx, anchor=south west,draw] at (4*\dx,0) (v1) {\LARGE $v_7$};
           \node[collection, minimum width=2*\dx, anchor=south west,draw] at (6*\dx,0) (v1) {\LARGE $v_5$};
           \node[collection, minimum width=3*\dx, anchor=south west,draw] at (8*\dx,0) (v1) {\LARGE $v_6$};
        \end{scope}

        \begin{scope}[shift={(0, 3.5)}] 
           \node[service, minimum width=9*\dx, anchor=south west] at (1*\dx,0) (service) {};
           \node[service, minimum width=4.5*\dx, anchor=south west] at (10.5*\dx,0) (service) {};
          
           \foreach \x in {0, 1,...,16} 
           \draw[-, very thick](\x*\dx, 0.1*\dx) -- (\x*\dx, -0.1*\dx)node[below] {\large $\x$};
           \draw[-, very thick] (0,0) node[anchor=south east] {}-- coordinate (xaxis) (16*\dx,0);
           \draw[->, very thick](0*\dx, 0.1*\dx) -- (0*\dx, 1.2*\dx)node[below] {};
           \draw[<-, very thick](16*\dx, 0.1*\dx) -- (16*\dx, 1.2*\dx)node[below] {};

           \node[collection, minimum width=1*\dx, anchor=south west,draw] at (3*\dx,0) (v1) {\LARGE $v_4$};
           \node[collection, minimum width=2*\dx, anchor=south west,draw] at (6*\dx,0) (v1) {\LARGE $v_8$};
           
        \end{scope}

        \begin{scope}[shift={(0, 1.5)}] 
          
           \node[service, minimum width=4.5*\dx, anchor=south west] at (1.5*\dx,0) (service) {};
           \node[service, minimum width=9*\dx, anchor=south west] at (7*\dx,0) (service) {};
          
           \foreach \x in {0, 1,...,16} 
           \draw[-, very thick](\x*\dx, 0.1*\dx) -- (\x*\dx, -0.1*\dx)node[below] {\large $\x$};
           \draw[-, very thick] (0,0) node[anchor=south east] {}-- coordinate (xaxis) (16*\dx,0);
           \draw[->, very thick](0*\dx, 0.1*\dx) -- (0*\dx, 1.2*\dx)node[below] {};
           \draw[<-, very thick](16*\dx, 0.1*\dx) -- (16*\dx, 1.2*\dx)node[below] {};

           \node[collection, minimum width=3*\dx, anchor=south west,draw] at (3*\dx,0) (v1) {\LARGE $v_2$};
           
        \end{scope}

           \begin{scope}[shift={(0, 0)}] 
                \node[service, minimum height=8mm, minimum width=8mm, rectangle, label=0:\LARGE Service] at (4, 0) {};
                \node[job, minimum height=8mm, minimum width=8mm, rectangle, label=0:\LARGE No Service] at (8,0) {};
              \end{scope}

    \end{tikzpicture}

\caption{Exemplary schedule for an ordinary reservation system of the DAG illustrated in Figure~\ref{fig:example-dag-plain} 
  with a deadline of $16$. A reservation system that consists of $4$ equally sized reservations of $13.5$ time units 
  using a $3$- greedy path collection.}
\label{fig:example-schedule-ordinary-reservation}
\end{figure}
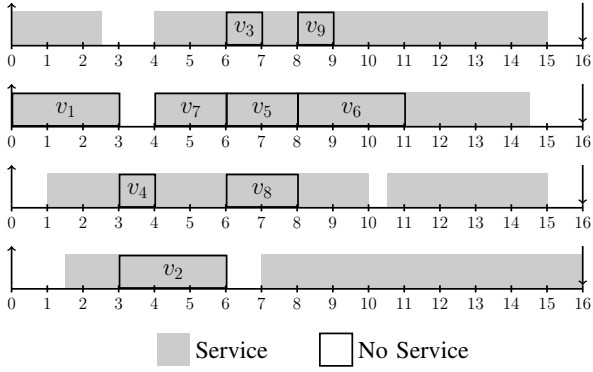

\begin{algorithm}[tb]
\caption{Minimal Service $m_i$-Ordinary Reservations}
\label{alg:ordinary-reservation-design}
\begin{algorithmic}[1]
\Require {DAG Task $\tau_i := (G_{i}, D_{i}, T_{i})$, No. CPU $M$, $vol_i$.}
\Ensure Minimal Service $m_i$-ordinary Reservations $\mathcal{O}_i$.
\State {$\xi \leftarrow [0, \dots, 0]$;}
\For{\textbf{each} $n \in \setof{1, \dots, M}$}
  \State {search the longest path $\pi_*$ in $G_i$ with respect to $vol$;}
  \If{$vol_i(\pi_*) > 0$} \Comment{Some subjobs not yet visited}
     \State{$\xi[n] \leftarrow \xi[n-1] + vol_i(\pi_*)$;}
     \State {set $vol_i(v) := 0$ for each $v$ in $\pi_*$;}
   \Else
   \State{break;}
\EndIf
\EndFor
\State{$E_{\sigma} \leftarrow (\emptyset, \infty);$} \Comment{$((m_i,n), \sum_{p=1}^{m_i} E_{i}^p)$}
\For{\textbf{each} $m_i \in \setof{1, \dots, M}$}
 \For{\textbf{each} $n \in \setof{1, \dots, m_i}$}
 \State{$E \leftarrow (m_i-n+1) \cdot \xi[1] + (n-1) \cdot D_{i} + C_i-\xi[n]$;}
 \If{$E \geq m_i \cdot D_i$ or $E \leq m_i \cdot \xi[1]$}
 \State{continue;}
 \EndIf
  \If{$E < E_{\sigma}[2]$;}
    \State{$E_{\sigma}[1] \leftarrow (m_i, n)$ and $E_{\sigma}[2] \leftarrow E$;}
  \EndIf
 \EndFor
\EndFor
\State{\textbf{return} $\mathcal{O}_i := \setof{(E_{\sigma}[1], D_i, T_i), \dots, (E_{\sigma}[1], D_i, T_i)}$;}
\end{algorithmic}
\end{algorithm}

An exemplary $4$-ordinary reservation system using a \mbox{$3$-} 
path collection for the DAG shown in Figure~\ref{fig:example-dag-plain} with deadline $16$
is illustrated in Figure~\ref{fig:example-schedule-ordinary-reservation}.
In this example, each reservation is equal in size which results to  $E_i^p = 13.5$ for $p \in \setof{1, \dots, 4}$ 
according to Eq.~\eqref{eq:res-theorem} with $D_i = 16$, $n=3$, $m_i = 4$, and $vol(V_s^c(\psi)) = 4$.
Please notice that the service can be provided arbitrarily depending on a concrete schedule 
as long as the promised service is provided in the arrival and deadline interval.
Similar to the problem of gang reservation provisioning, we search for feasible provisions 
that minimize the total service $\sum_{p=1}^{m_i} E_{i}^p$  under the constraint that 
$vol(\pi_*) \leq E_i^p \leq D_i$ for $p \in \setof{1, \dots, m_i}$ by exhaustive search and 
the greedy heuristic for the $n$- path collection problem as shown in Algorithm~\ref{alg:ordinary-reservation-design}.

\section{Evaluation}
\label{sec:evaluation}

The objectives of our evaluations are twofold, namely to first asses the performance strengths and weaknesses 
of the parallel path progression concepts compared to state-of-the-art single path approaches as represented by 
He et al.~\cite{DBLP:conf/ecrts/HeLG21} with respect to the makespan problem.
Secondly, we evaluate our proposed gang and ordinary reservation systems provisioning 
strategies from Algorithm~\ref{alg:gang-reservation-design} and Algorithm~\ref{alg:ordinary-reservation-design} 
for resource over-provisioning with respect to a lower-bound. 

In our hierarchical scheduling approach, the schedulability depends on the 
schedulability analysis used for the reservations' task models and the over-provisioning ratio.
Since we are only interested in the evaluation of the reservations systems themselves, 
we abstain from schedulability acceptance ratio experiments and focus on the over-provision ratio.
Throughout this section, we use \emph{HE} to refer to He et al.~\cite{DBLP:conf/ecrts/HeLG21}, 
\emph{FED} to refer to federated scheduling from Li et al.\cite{Li:ECRTS14}, \emph{OUR}, \emph{OUR-GANG}, 
and \emph{OUR-ORD} to refer to the bounds in Eq.~\eqref{eq:fed-response}, Eq.~\eqref{eq:gang-res}, 
and Eq.~\eqref{eq:res-theorem}, respectively.

\parlabel{Directed-Acyclic Graph Generation.}
Motivated by the fact that the internal structure of the DAG under evaluation 
strongly impacts the performance of the evaluated analyses, a parameterized generation 
process is used to randomly generate DAGs whose structure that can be attributed to the 
specified parameters. To that end,  we use a layer-by-layer DAG generation process 
with the parameters \emph{parallelism}, \emph{min layer}, \emph{max layers}, 
and \emph{connection probability}.
In a generation step, the number of layers is chosen uniform at random from the range 
\emph{min layer} to \emph{max layer}. In each layer, the number of subtasks 
is drawn uniform at random from the range $1$ to \emph{parallelism}.
The connection of subtasks at a layer $\ell$ is only allowed by subtasks 
from layer $\ell-1$. 
Each newly generated subtask is connected with a subtask from the previous layer 
with probability \emph{connection probability}.
Each subtask is assigned an integer worst-case execution time drawn uniform at random 
from the range $1$ to $100$.

\subsection{Makespan Experiment}
\label{evaluation:makespan-experiment}

We generated 100 DAGs with $5$-$10$ layers for each experiment 
with the following varying parameters; \emph{parallelism} in $\setof{2, 4, 8, 16, 20}$ and 
\emph{connection probability} in $\setof{.2, .6, .8}$, 
and evaluate the makespan, i.e., the worst-case response time of a single DAG job 
on $M$ processors exclusively, where $M$ is in $\setof{2, 4, 8, 16}$.
A representative selection of the results is shown in the box plots in Figure~\ref{fig:boxplot-8-20-2}, Figure~\ref{fig:boxplot-4-80-8}, 
Figure~\ref{fig:boxplot-8-80-16}, and Figure~\ref{fig:boxplot-20-20-16}.  
where the makespan of \emph{HE}, \emph{FED}, and \emph{OUR} is normalized to 
a theoretical lower-bound of $\max\setof{C/M, vol(\pi_*)}$, i.e., $100 \%$ implies a tight result.

\begin{figure}[tb]
\centering
\includegraphics[width=0.43\textwidth]{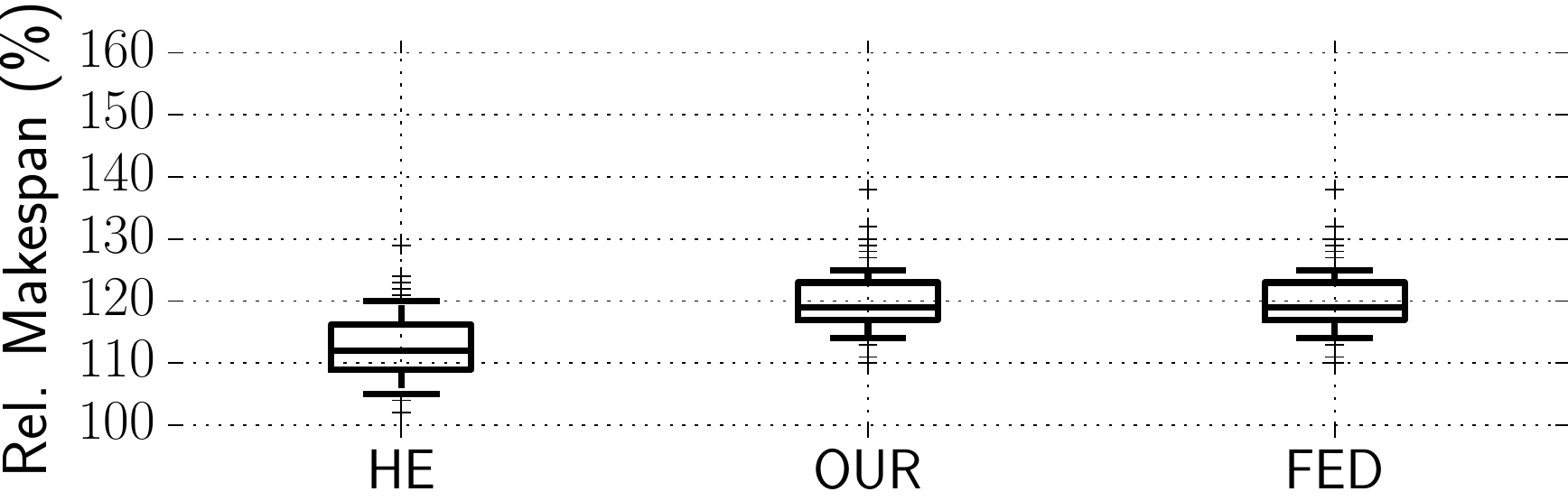}
\caption{Rel.makespan for $100$ DAGs generated with \emph{parallelism} $8$, \emph{connection probability} $20\%$ 
on $2$ processors.}
\label{fig:boxplot-8-20-2}
\end{figure}

\begin{figure}[tb]
\centering
\includegraphics[width=0.43\textwidth]{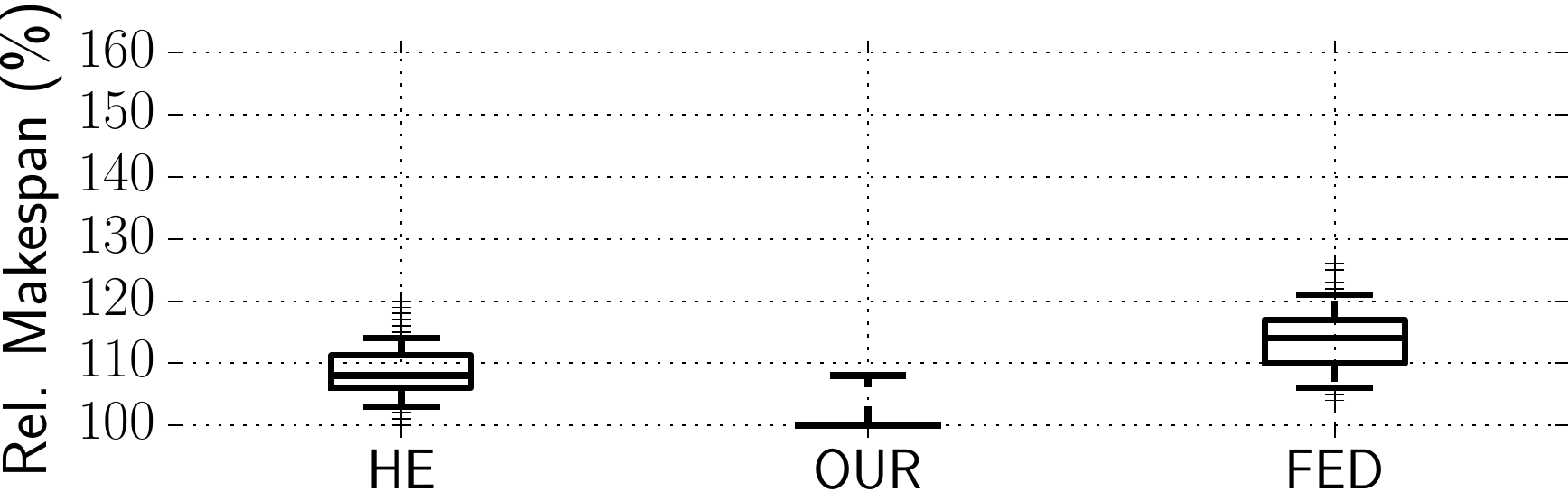}
\caption{Rel.makespan for $100$ DAGs generated with \emph{parallelism} $4$, \emph{connection probability} $80\%$ 
on $8$ processors.}
\label{fig:boxplot-4-80-8}
\end{figure}

\begin{figure}[tb]
\centering
\includegraphics[width=0.43\textwidth]{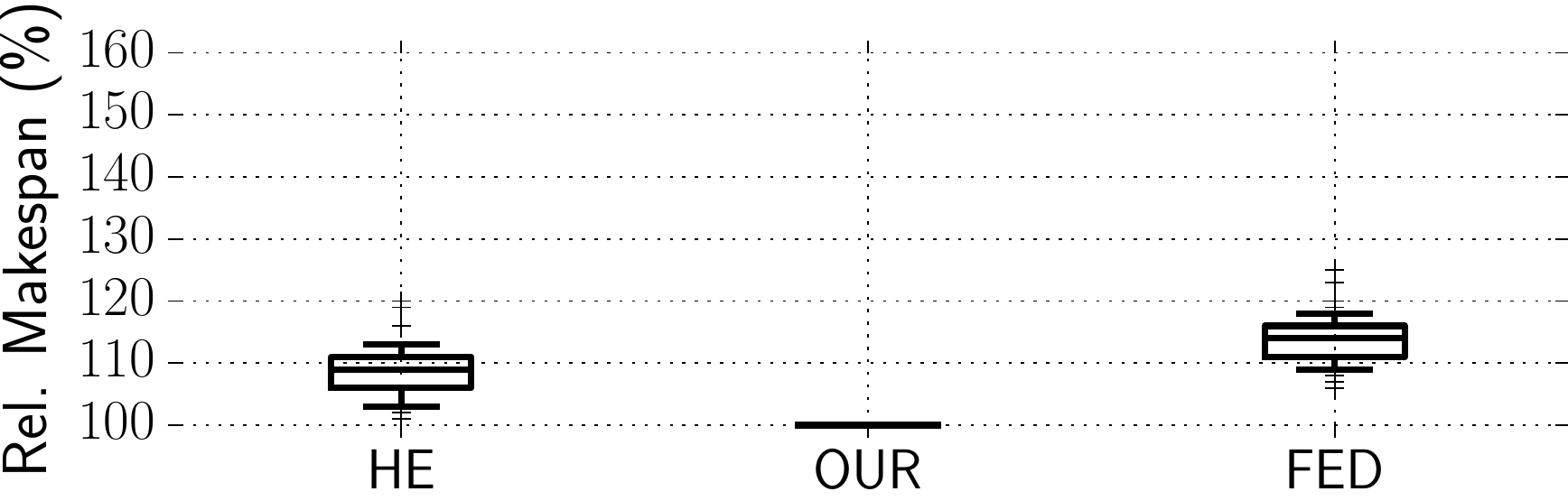}
\caption{Rel.makespan for $100$ DAGs generated with \emph{parallelism} $8$, \emph{connection probability} $80\%$ 
on $16$ processors.}
\label{fig:boxplot-8-80-16}
\end{figure}

\begin{figure}[tb]
\centering
\includegraphics[width=0.43\textwidth]{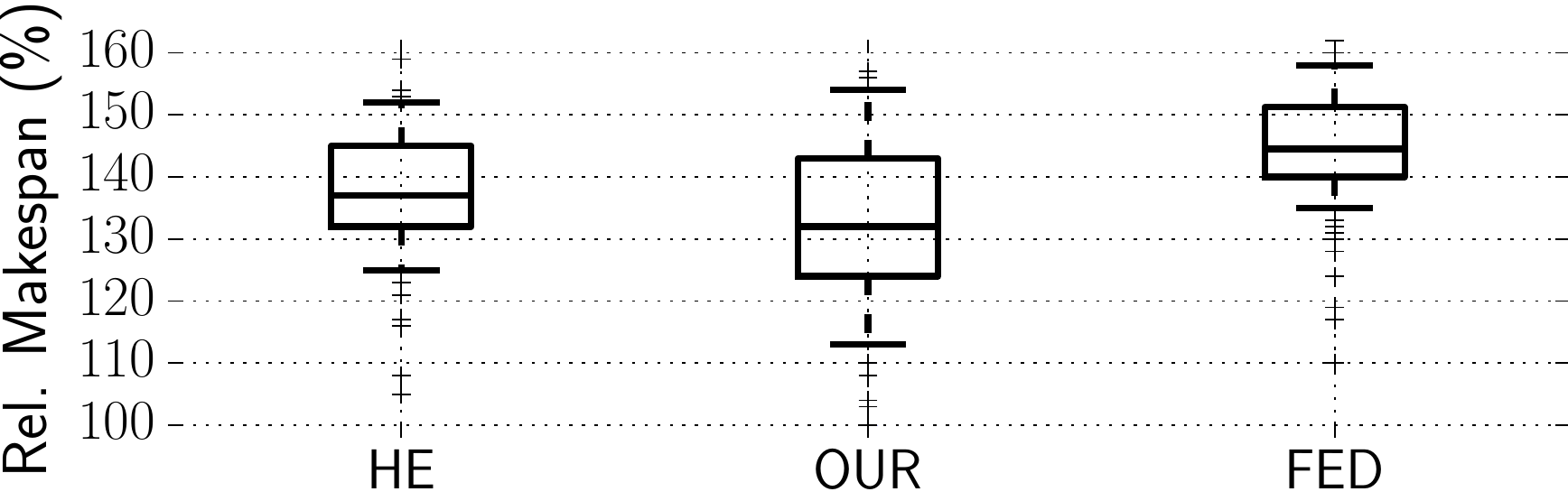}
\caption{Rel.makespan for $100$ DAGs generated with \emph{parallelism} $20$, \emph{connection probability} $20\%$ 
on $16$ processors.}
\label{fig:boxplot-20-20-16}
\end{figure}

\parlabel{Observation (Low-Parallelism).}
As shown in Figure~\ref{fig:boxplot-8-20-2} for the DAG set that was 
generated with the parameters \emph{parallelism} $8$ and \emph{connection probability} $20\%$ 
on $2$ processors, the analysis \emph{HE} yields an average relative makespan of roughly $112 \%$, 
whilst \emph{OUR} approach degenerates to federated scheduling \emph{FED} with roughly $119 \%$.
Due to the limited number of processors, our approach can not benefit from the inherent parallelism 
of the DAGs. 

The other case, in which the parallel execution is limited by the number of processors, 
is shown in Figure~\ref{fig:boxplot-20-20-16}. Here, the \emph{parallelism} of $20$ 
is larger than the number of processors $16$, and
\emph{OUR} 
approach outperforms \emph{HE} on average by as much as $5\%$. 
This case demonstrates that \emph{OUR} approach can leverage the 
parallelism offered by the processors and the inherent parallelism of the DAGs 
to improve the response time.

\parlabel{Observation (High-Parallelism).}
The largest performance improvements of \emph{OUR} can be observed 
for highly parallel DAGs in combination with high processor availability. This case is  
shown in Figure~\ref{fig:boxplot-8-80-16} with \emph{parallelism} of $8$ 
on $16$ processors and Figure~\ref{fig:boxplot-4-80-8} with 
\emph{parallelism} of $4$ on $8$ processors. 
We observe that \emph{OUR} approach not only outperforms 
\emph{HE} in both cases, but that \emph{OUR} approach even yields tight results for most evaluated DAGs.

\subsection{Reservations Over-Provisioning Experiment}
\label{sec:reservations-over-provisioning-experiment}

\begin{table}[htb]
\setlength{\arrayrulewidth}{.15em}
\setlength\extrarowheight{2.2pt}
\centering
\begin{tabular}{l|c|c|c}
  \textsc{Waste Ratio.} (\%)& \textsc{P=4} & \textsc{P=8} & \textsc{P=16}\\ \hline
 \textsc{OUR-GANG min.} & 27.26 & 29.22 & 36.22 \\
 \textsc{OUR-GANG max.} & 76.85 & 82.33 & 82.19 \\
 \textsc{OUR-GANG average} & 53.76 & 59.26 & 61.97 \\
 \textsc{OUR-GANG variance} & 87.06 & 74.42 & 62.14 \\ \hline
 \textsc{OUR-ORD min.} & 13.79 & 21.89 & 36.87 \\
 \textsc{OUR-ORD max.} & 76.64 & 81.77 & 82.07 \\
 \textsc{OUR-ORD average} & 53.33 & 59.54 & 62.4 \\
 \textsc{OUR-ORD variance} & 97.17 & 73.72 & 58.01 \\ \hline
\end{tabular}
\vspace{0.8em}
\caption{Minimal over-provisioning ratio for DAG sets with different degrees of parallelism $P$ 
on a system with infinite processors.}
\label{table:over-allocation-ratio}
\end{table}

We compare \emph{OUR-GANG} in Algorithm~\ref{alg:gang-reservation-design} and \emph{OUR-ORD} in Algorithm~\ref{alg:ordinary-reservation-design} 
in terms of an waste ratio, i.e., the reserved service minus the total workload divided by the reserved service. 

We generated 100 DAGs with $5$-$10$ for each of the following varying parameters: \emph{parallelism} in $\setof{4, 8, 16}$, 
\emph{connection probability} in $\setof{.1, .2, .3}$,   
deadline such that $vol(\pi_*) < D < \min\setof{\rho \cdot vol(\pi_*), C}$ 
for $\rho \in \setof{1.2, 1.4, 1.6, 1.8}$.  
We assume that a sufficient number of processors 
is available such that a reservation system can be found for every generated deadline.
We then evaluated all generates sets with the same parallelism 
$P$ together and present the results in Table.~\ref{table:over-allocation-ratio}. 

\parlabel{Observation.}
The overall statistics of both reservation systems are quite similar for each 
level of parallelism. Especially, the average values only differ 
by $0.42 \%$. For both, the waste ratio slightly increases with the amount 
of parallelism,  
resulting from the increased number of paths in a path collection 
(and thus at least a similar increase in the number of reservations) 
to exploit the parallelism and satisfy deadline constraints.

\subsection{Reservations Improvement Experiment}

Lastly, we compare our ordinary reservation system to the state-of-the-art hierarchical DAG scheduling approach proposed 
by Ueter et al.~\cite{Ueter2018}. Recall that \emph{OUR-ORD} is a generalization of their approach. That is,  
\emph{OUR-ORD}, as generated by Algorithm~\ref{alg:ordinary-reservation-design}, 
always has a higher resource efficiency,
since a solution of Ueter et al.~\cite{Ueter2018} is identical to our 
approach if $n=1$.
However, to what extend resources in terms of the number of reservations and the overall service 
can be saved is not clear beforehand and thus subject to our second experiment.
We use the same setup as  
described in Section~\ref{sec:reservations-over-provisioning-experiment} 
and evaluate the reservation system as generated by Ueter et al.~\cite{Ueter2018} referred to as \emph{UET} 
against \emph{OUR-ORD} with respect to the overall reserved service and the number of parallel reservations 
for DAGs with parallelism $P \in \setof{4, 8, 16}$.

\begin{figure}[tb]
\centering
\includegraphics[width=0.43\textwidth]{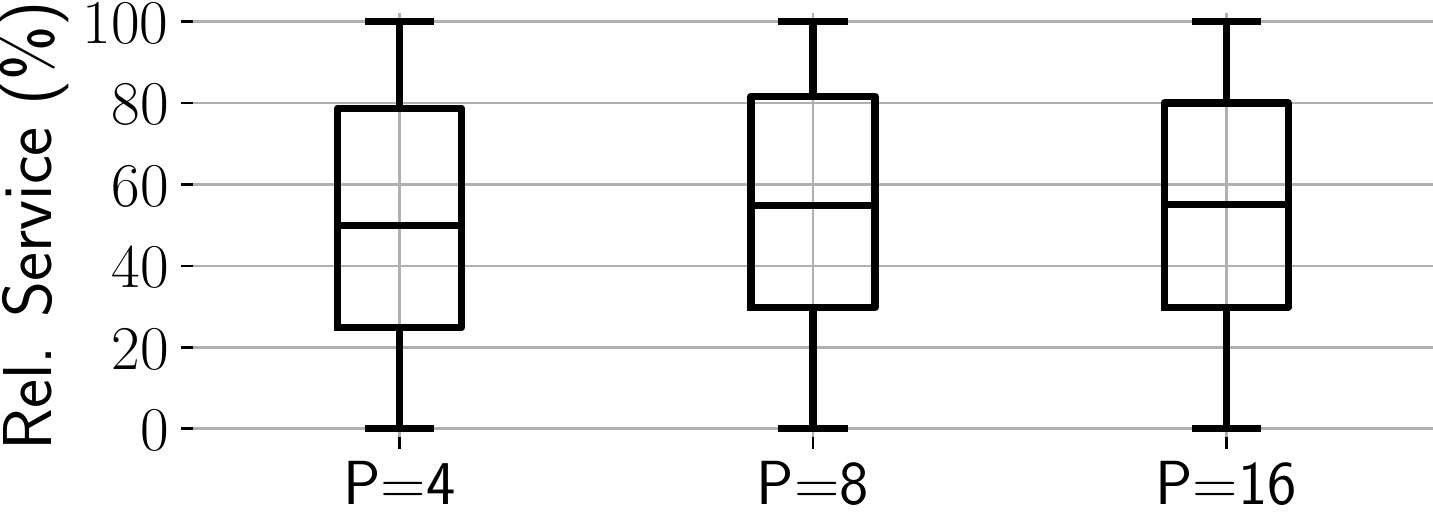}
\caption{Reserved service of \emph{OUR-ORD} normalized to \emph{UET} for $100$ DAGs generated with \emph{parallelism} $P \in \setof{4, 8, 16}$.}
\label{fig:service-comparison}
\end{figure}

\begin{figure}[tb]
\centering
\includegraphics[width=0.43\textwidth]{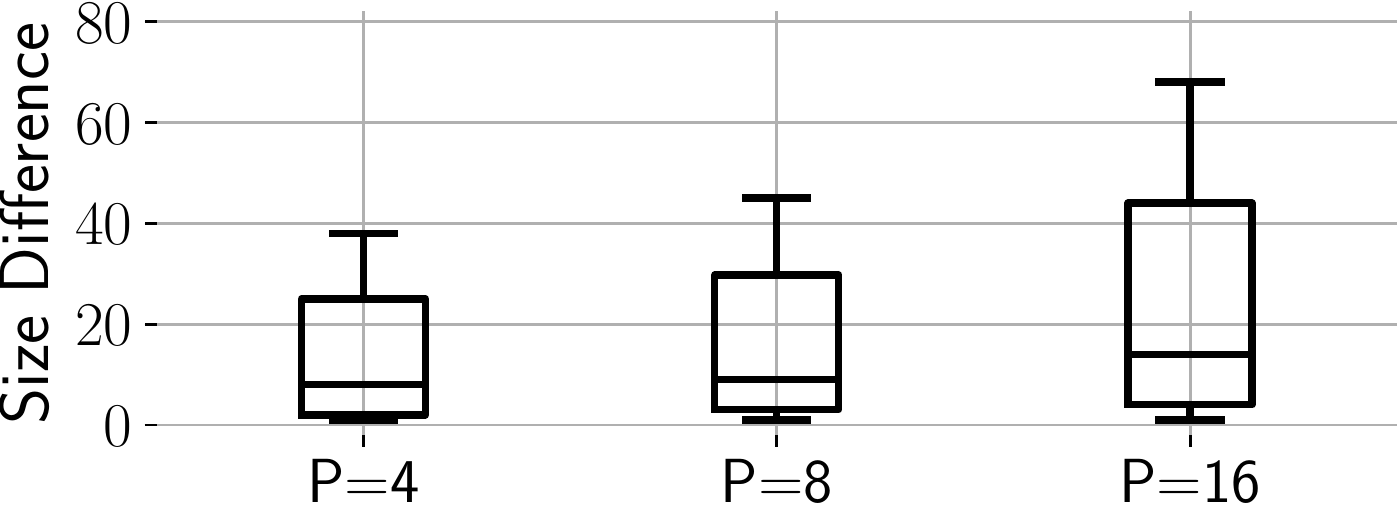}
\caption{Absolute difference of the number of reservations of \emph{UET} minus \emph{OUR-ORD} for \emph{parallelism} $P \in \setof{4, 8, 16}$.}
\label{fig:res-size-comparison}
\end{figure}

\begin{table}[tb]
\setlength{\arrayrulewidth}{.15em}
\setlength\extrarowheight{2.2pt}
\centering
\begin{tabular}{l|c|c|c}
  \textsc{Size Difference}& \textsc{P=4} & \textsc{P=8} & \textsc{P=16}\\ \hline
 \textsc{min.} & 1 & 1 & 1 \\
 \textsc{max.} & 839 & 2035 & 3489 \\
 \textsc{average} & 30.42 & 48.57 & 64.72 \\
 \textsc{variance} & 5404.84 & 21760.85 & 43247.11 \\ \hline
\end{tabular}
\vspace{0.8em}
\caption{Statistics of the absolute reservation size differences for different degrees of parallelism $P$.}
\label{table:res-size-differences}
\end{table}

\parlabel{Observation (Service).}
Figure~\ref{fig:service-comparison} shows 
that the reserved service of \emph{OUR-ORD} saves roughly $50\%$ 
on average for each level of parallelism, while the improvements 
are larger for smaller parallelism.

\parlabel{Observation (Size).}
In Figure~\ref{fig:res-size-comparison} and Table~\ref{table:res-size-differences}, 
the results for the absolute differences in the reservation sizes, i.e., \emph{UET} minus \emph{OUR-ORD}, 
is shown.
Table~\ref{table:res-size-differences} shows that \emph{OUR-ORD} can significantly decrease the number of required 
parallel reservations to meet deadline constraints by roughly $30,~49,$ and $65$ reservations.
Interestingly, the average difference is relatively stable compared to the range and
variance for increasing degree of parallelism.
This demonstrates that our approach can improve \emph{UET} to be applicable
for highly parallel DAG tasks with tight deadline constraints.

\section{Related Work}
\label{sec:related-work}

Real-time aware scheduling of parallel task systems has been 
extensively studied for a variety of different proposed task models. 
Goosens et al.~\cite{DBLP:journals/corr/abs-1006-2617} have provided a 
classification of parallel task with real-time constraints.

Early work on parallel task models focused on synchronous parallel task models,
e.g.,~\cite{DBLP:conf/rtns/MaiaBNP14, SaifullahRTSS2011, DBLP:conf/ecrts/ChwaLPES13}. 
Synchronous parallel task models extend the fork-join model~\cite{Conway63} in such a way that 
they allow different numbers of subtasks in each (synchronized) segment where 
the number of subtasks can exceed the number of processors. 

A prominent parallel task model that has been subject to many recent scheduling and analysis 
efforts, is the directed-acyclic graph (DAG) task model.
The DAG models subtask-level parallelism by acyclic precedence constraints 
for a set of subtasks. This parallel model has been shown to correspond to models available 
in parallel computing APIs such as OpenMP by Melani et al.~\cite{DBLP:conf/cases/SerranoMVMBQ15}, Sun et al.~\cite{DBLP:conf/rtss/SunGWHY17}, 
or Serrano et al.~\cite{Serrano2018}. The parallel DAG task model has been studied for global~\cite{DBLP:conf/ecrts/BonifaciMSW13,nasri_et_al:LIPIcs:2019:10758, Chen+Agrawal2014} and 
partitioned scheduling~\cite{DBLP:conf/sies/FonsecaNNP16, 8603232, Amor2019, Fonseca:2017:RTNS}.

The proposed scheduling algorithms and analyses in the literature can be 
categorized into decompositional and non-decompositional. 
In the former, the parallel task model is decomposed into a set of sequential task models, which 
is scheduled and analyzed in their stead, e.g.,~\cite{JiangTCAD}. Non-decompositional approaches consider the peculiarities 
of the parallel task models, e.g.,~\cite{Li:ECRTS14, Ueter2018, DBLP:conf/date/Baruah15, nasri_et_al:LIPIcs:2019:10758, DongRTSS, DBLP:conf/ecrts/BonifaciMSW13}.

A prominent decomposition based approach is federated scheduling by Li et al.~\cite{Li:ECRTS14}
that avoids inter-task interference for parallel tasks. It has been extended in, e.g., \cite{RTSS-2017-semi-federated, Jiang2021, DinhRTCSA2020, Ueter2018, DBLP:conf/date/Baruah15, DBLP:conf/ipps/Baruah15}.
In the original federated scheduling approach, the set of DAG tasks is partitioned into
tasks that can be executed sequentially on a single processor and tasks that need
to execute in parallel to finish before their respective deadlines.

\section{Conclusion and Future Work}
\label{sec:conclusion-and-future-work}

We present the parallel path progression concept that allows 
to improve the self-interference analysis by explicitly 
considering the parallel execution of paths in the DAG. 
We propose a sustainable scheduling algorithm and analysis that 
is extended by virtue of 
hierarchical scheduling for gang-based and ordinary reservation systems for sporadic 
arbitrary-deadline DAG tasks. We designed an approximation algorithm for the $n$-path 
collection and proved that the then resulting makespan in our parallel path progression 
scheduling algorithm yields upper-bounds with respect to an optimal solution.
For these reservations, we provide heuristic algorithms that provision 
reservation systems with respect to the service they require.
We evaluated our approach using synthetically generated DAG task sets 
and demonstrated that our approach can improve the state-of-the art in high-parallelism 
scenarios while demonstrating reasonable performance for low-parallelism scenarios.
In future work, we plan to improve the active idling issues of the proposed reservation 
systems by considering self-suspension behaviour for the reservation systems. 

\section*{Acknowledgement}
This result is part of a project~(PropRT) that has received funding
from the European Research Council~(ERC) under the European Union’s
Horizon 2020 research and innovation programme~(grant agreement
No. 865170).  This work has been supported by Deutsche
Forschungsgemeinschaft~(DFG), as part of Sus-Aware~(Project
No. 398602212).
        
\bibliography{real-time}
\bibliographystyle{abbrv}

\end{document}